\newcounter{subcopyrightbox@save}
\newtheorem{theorem}{Theorem}
\newtheorem{proposition}{Proposition}
\newtheorem{definition}{Definition}
\newcommand{\myparatight}[1]{\smallskip\noindent{\bf {#1}:}~}
\newcommand{\RN}[1]{%
  \textup{\uppercase\expandafter{\romannumeral#1}}%
}
  \providecommand\BibTeX{{%
    \normalfont B\kern-0.5em{\scshape i\kern-0.25em b}\kern-0.8em\TeX}}}
\begin{document}

\begin{center}
{\LARGE{\bf{On the Intrinsic Differential Privacy of Bagging}}}

  \vspace{1cm}
\begin{tabular}{ccccc}
Hongbin Liu & & Jinyuan Jia && Neil Zhenqiang Gong\\
Duke University && Duke University && Duke University\\
hongbin.liu@duke.edu && jinyuan.jia@duke.edu && neil.gong@duke.edu\\
\end{tabular}
\end{center}
  \vspace{.5cm}


\begin{abstract}
Differentially private machine learning trains models while protecting privacy of the sensitive training data. 
The key to obtain differentially private models is to introduce noise/randomness to the training process.  
In particular, existing differentially private machine learning methods add noise to the training data, the gradients, the loss function, and/or the model itself. Bagging, a popular ensemble learning framework, randomly creates some subsamples of the training data, trains a base model for each subsample using a base learner, and takes majority vote among the base models when making predictions.  Bagging has intrinsic randomness in the training process as it randomly creates subsamples. Our major theoretical results show that such intrinsic randomness already makes Bagging differentially private without the needs of additional noise. In particular, we prove that, for any base learner, Bagging with and without replacement respectively achieves  $\left(N\cdot k \cdot \ln{\frac{n+1}{n}},1- (\frac{n-1}{n})^{N\cdot k}\right)$-differential privacy  and $\left(\ln{\frac{n+1}{n+1-N\cdot k}}, \frac{N\cdot k}{n} \right)$-differential privacy, where $n$ is the training data size, $k$ is the subsample size, and $N$ is the number of base models. Moreover, we  prove that if no assumptions about the base learner are made, our derived privacy guarantees are tight. We empirically evaluate Bagging on MNIST and CIFAR10. Our experimental results demonstrate that Bagging achieves significantly higher accuracies than  state-of-the-art differentially private machine learning methods with the same privacy budgets.  
\end{abstract}


\section{Introduction}

Machine learning has transformed various areas such as  computer vision, natural language processing, healthcare, and cybersecurity. However, since a model is essentially some aggregate of the training data, the model may reveal rich information about the training data. For instance, with access to a model or only its prediction API, \emph{model inversion}~\cite{fredrikson2015model} can reconstruct (representative) training data of the model, while \emph{membership inference}~\cite{shokri2017membership} can predict whether a given data point is among the model's training data or not. As a result, the model may compromise the privacy or confidentiality of the sensitive or proprietary training data such as electronic health records, location traces, and online digital behaviors. Moreover, various countries have passed laws such as General Data Protection Regulation (GDPR)~\cite{voigt2017eu} to  regulate and protect data privacy. Therefore, privacy-preserving machine learning that trains models while protecting privacy of the training data is gaining increasing attention in both academia and industry.  

$(\epsilon,\delta)$-differential privacy~\cite{dwork2014algorithmic} has become a de facto standard for privacy-preserving data analytics. Many studies~\cite{hamm2016learning,abadi2016deep,papernot2016semi,papernot2018scalable,jordon2018pate,xie2018differentially} have extended $(\epsilon,\delta)$-differential privacy to machine learning. Roughly speaking, a machine learning method satisfies differential privacy if the learnt model does not change much when adding or removing one example in the training data. The key idea of differentially private machine learning is to introduce noise/randomness in the training process. Specifically, existing methods introduce randomness to the training data, the gradients when stochastic gradient descent is used to learn a model, the loss function, and/or the model itself. For instance, Differentially Private Stochastic Gradient Descent (DPSGD)~\cite{abadi2016deep} introduces well-calibrated Gaussian noise to the gradient computed from a random batch of the training data in each iteration when using stochastic gradient descent to learn a model. Private Aggregation of Teacher Ensembles (PATE)~\cite{papernot2016semi,papernot2018scalable} trains multiple teacher models on pre-defined disjoint chunks of the sensitive training data. Then, PATE uses the teacher models to predict labels for  some examples in a non-sensitive public dataset, aggregates the labels for each example, and adds noise to the aggregated labels to achieve differential privacy. Finally, PATE trains a student model using the non-sensitive public dataset with the aggregated labels predicted by the teacher models. The student model satisfies $(\epsilon,\delta)$-differential privacy. 

\myparatight{Our work}
Bagging~\cite{breiman1996bagging}, a popular ensemble learning framework, randomly creates some subsamples of the training data, trains a base model for each subsample using a base learner, and takes majority vote among the base models when making predictions. Bagging has intrinsic randomness in the training process as it randomly creates subsamples. Our major theoretical results have two folds. On one hand, we show that the intrinsic randomness of Bagging already makes it differentially private without the needs of additional noise. In particular, we prove that, for any base learner, {Bagging} with and without replacement respectively achieves $\left(N\cdot k \cdot \ln{\frac{n+1}{n}},1- (\frac{n-1}{n})^{N\cdot k}\right)$-differential privacy  and $\left(N \cdot\ln{\frac{n+1}{n+1- N \cdot k}},\frac{N \cdot k}{n}\right)$-differential privacy, where $n$ is the training data size, $k$ is the subsample size, and $N$ is the number of base models. Moreover, we prove that if no assumptions about the base learner are made, our derived privacy guarantees are tight. On the other hand, our theoretical results indicate that Bagging can only provide differential privacy with $\delta \geq 1/n$. According to Dwork and Roth~\cite{dwork2014algorithmic},  $\delta \geq 1/n$ provides ``just a few'' privacy guarantee,  which is equivalent to protecting the privacy of most training examples while compromising the privacy of just a few training examples. 
We empirically evaluate {Bagging} on MNIST and CIFAR10. For instance, Bagging achieves 79.55\% testing accuracy on CIFAR10 with privacy budget $\epsilon=0.2,\delta=0.18$ (i.e., $k=10000$ and $N=1$). With the same privacy budget, DPSGD~\cite{abadi2016deep} only achieves 30.63\% testing accuracy. 

Our main contributions can be summarized as follows:
\begin{itemize}
    \item We derive the intrinsic $(\epsilon,\delta)$-differential privacy of Bagging. 
    \item We prove our derived $(\epsilon,\delta)$-differential privacy of Bagging is tight if no extra assumptions about the base learner are given.
    \item We empirically compare Bagging with  state-of-the-art privacy-preserving machine learning methods on MNIST and CIFAR10.
\end{itemize}


\section{Background and Related Work}
\label{sec:background}
In this section, we first review the concept of $(\epsilon,\delta)$-differential privacy as well as its composition theorem and post-processing property. Then, we  review Bagging~\cite{breiman1996bagging} and  existing differentially private machine learning approaches.

\subsection{Differential Privacy}
Differential privacy~\cite{dwork2014algorithmic} is defined in terms of \emph{adjacent datasets}. In machine learning, a dataset consists of  training examples. We call two training datasets {adjacent datasets} if there only exists one training example that appears in one dataset but is absent in the other. With the definition of {adjacent datasets}, we can introduce the definition of $(\epsilon,\delta)$-differential privacy as follows:

\begin{definition}[$(\epsilon,\delta)$-differential privacy~\cite{dwork2014algorithmic}]\label{def-dp}
A randomized mechanism $\mathcal{M:D \rightarrow R}$ satisfies $(\epsilon,\delta)$-differential privacy if for all adjacent datasets ${D}$, ${D}^{\prime} \in \mathcal{D}$, and for all $S \subseteq \mathcal{R}$, it holds that:  
\begin{align}
    \mathrm{Pr}(\mathcal{M}(D)\in S) \leq e^{\epsilon}\cdot \mathrm{Pr}(\mathcal{M}(D^{\prime})\in S) + \delta, 
\end{align}
where the randomness is taking over the mechanism $\mathcal{M}$. 
\end{definition}

In machine learning, the randomized mechanism $\mathcal{M}$ denotes the algorithm to train a model. $(\epsilon,\delta)$-differential privacy formalizes that the learnt model does not change much when adding or removing an arbitrary example in the training data. $\epsilon$ and $\delta$ quantify the upper bound of observable probability differences between the learnt models conditioned on adjacent datasets. If $\delta=0$, we say that $\mathcal{M}$ is $\epsilon$-differentially private~\cite{dwork2014algorithmic}. Thereby, the additive term $\delta$ is considered as the probability at which the $\epsilon$-differential privacy guarantee may be broken. 

When several differential privacy mechanisms are composed, the differential privacy guarantee of the composed mechanism is the sum of the privacy guarantees of the individual mechanisms. Formally, differential privacy follows the following standard composition theorem~\cite{dwork2014algorithmic}:
\begin{theorem}
[Composition theorem of $(\epsilon,\delta)$-differential privacy~\cite{dwork2014algorithmic}]
\label{composition theorem}
Let $\mathcal{M}_{i}: \mathcal{D} \rightarrow \mathcal{R}_{i}$ be an $\left(\epsilon_{i}, \delta_{i}\right)$-differentially private
algorithm for $i \in[k] .$ If $\mathcal{M}_{[k]}: \mathcal{D} \rightarrow \prod_{i=1}^{k} \mathcal{R}_{i}$ is defined to
be $\mathcal{M}_{[k]}(\mathcal{D})=\left(\mathcal{M}_{1}(\mathcal{D}), \cdots, \mathcal{M}_{k}(\mathcal{D})\right),$ then $\mathcal{M}_{[k]}$ satisfies $\left(\sum_{i=1}^{k} \epsilon_{i}, \sum_{i=1}^{k} \delta_{i}\right)$-differential privacy.
\end{theorem}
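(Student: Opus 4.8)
The plan is to prove the statement for $k=2$ and then lift to general $k$ by a routine induction. Once we know that the combination of two mechanisms that are $(\epsilon_1,\delta_1)$- and $(\epsilon_2,\delta_2)$-differentially private (running on independent randomness, as is implicit in the statement) is $(\epsilon_1+\epsilon_2,\delta_1+\delta_2)$-differentially private, we simply view $\mathcal{M}_{[k]}$ as the combination of $\mathcal{M}_{[k-1]}$ and $\mathcal{M}_k$; by the induction hypothesis $\mathcal{M}_{[k-1]}$ is $\left(\sum_{i=1}^{k-1}\epsilon_i,\sum_{i=1}^{k-1}\delta_i\right)$-differentially private, and the general claim follows.

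For the two-mechanism step I would first record a functional restatement of Definition~\ref{def-dp}: if $\mathcal{M}$ is $(\epsilon,\delta)$-differentially private, then for all adjacent $D,D'$ and every measurable $g:\mathcal{R}\to[0,1]$ one has $\mathbb{E}\!\left[g(\mathcal{M}(D))\right]\le e^{\epsilon}\,\mathbb{E}\!\left[g(\mathcal{M}(D'))\right]+\delta$. This comes from the layer-cake identity $g(r)=\int_0^1 \mathbbm{1}[g(r)>t]\,dt$: writing $\mathbb{E}[g(\mathcal{M}(D))]=\int_0^1 \mathrm{Pr}\!\left(\mathcal{M}(D)\in\{r:g(r)>t\}\right)dt$ (and likewise for $D'$), applying Definition~\ref{def-dp} with the set $\{r:g(r)>t\}$ for each $t$, and integrating over $t\in[0,1]$.

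Now fix adjacent $D,D'$ and a measurable $S\subseteq\mathcal{R}_1\times\mathcal{R}_2$, and for $r_1\in\mathcal{R}_1$ let $S_{r_1}=\{r_2:(r_1,r_2)\in S\}$, $h(r_1)=\mathrm{Pr}(\mathcal{M}_2(D)\in S_{r_1})$, and $h'(r_1)=\mathrm{Pr}(\mathcal{M}_2(D')\in S_{r_1})$. Independence gives $\mathrm{Pr}(\mathcal{M}_{[2]}(D)\in S)=\mathbb{E}[h(\mathcal{M}_1(D))]$ and $\mathrm{Pr}(\mathcal{M}_{[2]}(D')\in S)=\mathbb{E}[h'(\mathcal{M}_1(D'))]$. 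The key device is the clipped function $g(r_1)=\min\{e^{\epsilon_2}h'(r_1),\,1\}$: it is $[0,1]$-valued; by $(\epsilon_2,\delta_2)$-privacy of $\mathcal{M}_2$ together with the trivial bound $h\le 1$ it satisfies $h\le g+\delta_2$ pointwise; and it satisfies $g\le e^{\epsilon_2}h'$ pointwise. Chaining these with the functional restatement applied to $\mathcal{M}_1$ on the $[0,1]$-valued test function $g$ gives
\begin{align*}
\mathrm{Pr}(\mathcal{M}_{[2]}(D)\in S)=\mathbb{E}[h(\mathcal{M}_1(D))]
&\le \mathbb{E}[g(\mathcal{M}_1(D))]+\delta_2\\
&\le e^{\epsilon_1}\mathbb{E}[g(\mathcal{M}_1(D'))]+\delta_1+\delta_2\\
&\le e^{\epsilon_1+\epsilon_2}\mathbb{E}[h'(\mathcal{M}_1(D'))]+\delta_1+\delta_2,
\end{align*}
which is exactly the claimed bound.

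I expect the only genuine subtlety to be the $\delta$-bookkeeping. The naive route --- bounding $h\le e^{\epsilon_2}h'+\delta_2$ directly and then factoring $e^{\epsilon_2}$ out before invoking the privacy of $\mathcal{M}_1$ --- loses a factor and yields only the weaker additive term $\delta_1+e^{\epsilon_1}\delta_2$. Clipping $g$ at $1$ is precisely what keeps it a legitimate $[0,1]$-test function for $\mathcal{M}_1$, so that $\mathcal{M}_1$'s privacy contributes only an additive $\delta_1$ with no cross term; everything else (the layer-cake lemma, Tonelli to exchange the $t$-integral with the expectation, the use of independence to factor $\mathrm{Pr}(\mathcal{M}_{[2]}(D)\in S)$ into an iterated expectation, and the induction) is routine.
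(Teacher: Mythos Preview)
Your argument is correct. The clipping device $g=\min\{e^{\epsilon_2}h',1\}$ is exactly the right move: it simultaneously keeps $g$ in $[0,1]$ so that the functional form of $(\epsilon_1,\delta_1)$-privacy applies to $\mathcal{M}_1$ without inflating the $\delta_1$ term, and still dominates $h-\delta_2$ pointwise (using both $h\le e^{\epsilon_2}h'+\delta_2$ and the trivial $h\le 1$). The layer-cake derivation of the functional restatement and the induction to general $k$ are routine, as you say.

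As for comparison with the paper: the paper does not give its own proof of this theorem at all---it simply points to Theorem~3.16 of Dwork and Roth~\cite{dwork2014algorithmic}. Your write-up is therefore strictly more than what the paper supplies. It is also essentially the standard argument one finds in that reference (induction reducing to $k=2$, then conditioning on the first mechanism's output and carefully handling the additive $\delta$), so there is no methodological divergence to discuss; you have filled in what the paper left as a citation.
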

\begin{proof}
Please refer to the proof of Theorem 3.16 in Dwork and Roth~\cite{dwork2014algorithmic}
\end{proof}
The  composition theorem is a standard way to obtain privacy guarantees for repeated application of differentially private algorithms. Besides the composition theorem, $(\epsilon,\delta)$-differential privacy also has the following post-processing property: 
\begin{proposition}
[Post-processing~\cite{dwork2014algorithmic}]
\label{post-processing}
Let $\mathcal{M}: \mathcal{D} \rightarrow R$ be a randomized algorithm that is $(\epsilon, \delta)$-differentially private. If $f: R \rightarrow R^{\prime}$ is an arbitrary randomized or deterministic mapping. Then $f \circ \mathcal{M}: \mathcal{D} \rightarrow R^{\prime}$ satisfies $(\epsilon, \delta)$-differential privacy.
\end{proposition}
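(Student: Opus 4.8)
The plan is to reduce the randomized case to the deterministic case, and to settle the deterministic case directly from Definition~\ref{def-dp}. First I would treat a deterministic map $f : R \rightarrow R^{\prime}$. Fix an arbitrary measurable event $T \subseteq R^{\prime}$ and pass to its preimage $S = f^{-1}(T) = \{ r \in R : f(r) \in T \} \subseteq R$. Since $f(\mathcal{M}(D)) \in T$ holds exactly when $\mathcal{M}(D) \in S$, we get $\mathrm{Pr}(f(\mathcal{M}(D)) \in T) = \mathrm{Pr}(\mathcal{M}(D) \in S)$, and likewise with $D^{\prime}$ in place of $D$. Applying the $(\epsilon,\delta)$-differential privacy of $\mathcal{M}$ (Definition~\ref{def-dp}) to the adjacent pair $D, D^{\prime}$ and the set $S$ yields $\mathrm{Pr}(\mathcal{M}(D) \in S) \le e^{\epsilon}\cdot \mathrm{Pr}(\mathcal{M}(D^{\prime}) \in S) + \delta$, hence $\mathrm{Pr}(f(\mathcal{M}(D)) \in T) \le e^{\epsilon}\cdot \mathrm{Pr}(f(\mathcal{M}(D^{\prime})) \in T) + \delta$. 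As $T$ was arbitrary, $f \circ \mathcal{M}$ is $(\epsilon,\delta)$-differentially private.

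Next I would extend to a randomized $f : R \rightarrow R^{\prime}$. The key point is that a randomized mapping can be represented as a mixture of deterministic ones: there is an internal randomness variable $\omega$, drawn from some distribution independent of $\mathcal{M}$, such that conditioned on $\omega = w$ the map $r \mapsto f_w(r)$ is deterministic. For any event $T \subseteq R^{\prime}$, I would condition on $\omega$ and invoke the deterministic case for each $f_w$, obtaining $\mathrm{Pr}(f(\mathcal{M}(D)) \in T \mid \omega = w) \le e^{\epsilon}\cdot \mathrm{Pr}(f(\mathcal{M}(D^{\prime})) \in T \mid \omega = w) + \delta$ for every $w$. Since $\omega$ is independent of the randomness of $\mathcal{M}$, taking expectation over $\omega$ and using linearity preserves the inequality, giving $\mathrm{Pr}(f(\mathcal{M}(D)) \in T) \le e^{\epsilon}\cdot \mathrm{Pr}(f(\mathcal{M}(D^{\prime})) \in T) + \delta$, which is the claim.

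The only obstacle is measure-theoretic bookkeeping rather than anything conceptual: making the ``mixture of deterministic maps'' representation precise for general (possibly continuous) ranges $R, R^{\prime}$, and checking that the preimage $S = f^{-1}(T)$ is measurable so that Definition~\ref{def-dp} is applicable to it. In the discrete setting relevant here this is immediate, and in general it follows from the standard randomization-via-an-auxiliary-uniform construction; no ideas beyond the definition of differential privacy are required.
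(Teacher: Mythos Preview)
Your argument is correct and is exactly the standard proof: the paper does not give its own proof but simply cites Proposition~2.1 of Dwork and Roth~\cite{dwork2014algorithmic}, whose proof proceeds precisely by pulling back the event through a deterministic $f$ and then averaging over the internal randomness for a randomized $f$. There is nothing to add.
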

\begin{proof}
Please refer to the proof of Proposition 2.1 in Dwork and Roth~\cite{dwork2014algorithmic}.
\end{proof}

The post-processing property ensures that the computation results of a differentially private mechanism can be safely released because
any post-processing computation of originally $(\epsilon,\delta)$-differentially private algorithm will also be $(\epsilon,\delta)$-differentially private. The composition theorem and post-processing property make $(\epsilon,\delta)$-differential privacy applicable to analyze/design complex differentially private algorithms. 

\subsection{Bagging}
Ensemble learning~\cite{dietterich2000ensemble} tries to combine the base models produced by several learners into an ensemble that performs better than the original base learners.
Bagging is a popular ensemble learning framework~\cite{breiman1996bagging}, which we formally define as follows: 

\begin{definition}[Bagging (\textbf{B}ootstrap \textbf{Agg}regat\textbf{ing})~\cite{breiman1996bagging}] Given a training dataset $D$ of size $n$, Bagging generates $N$ subsamples $D_{i}~ (i=1,2,\cdots,N)$. Each subsample $D_{i}$ randomly samples $k$ examples from $D$ {with or without replacement}. Then, Bagging trains a base model on each subsample $D_{i}$ using a base learner. When predicting the label for a testing example, Bagging takes majority vote among the base models.
\end{definition}

If Bagging creates a subsample by randomly sampling $k$ examples from the training dataset \emph{with replacement}, some examples in the training dataset may be chosen multiple times. If Bagging creates a subsample by randomly sampling $k$ examples from the training dataset \emph{without replacement}, an example in the training dataset may be selected at most once. In the following parts of this paper, we distinguish these two methods as Bagging \emph{with replacement} and Bagging \emph{without replacement}, respectively. In Section \ref{sec:methodology}, our major theoretical results show that Bagging's intrinsic randomness brought by its re-sampling process already satisfies $(\epsilon,\delta)$-differential privacy.

\subsection{Differentially Private Machine Learning}
Many studies~\cite{hamm2016learning,abadi2016deep,papernot2016semi,papernot2018scalable,jordon2018pate,xie2018differentially,chaudhuri2011differentially,kifer2012private,song2013stochastic,bassily2014private,wang2017differentially,jordon2018pate} have extended $(\epsilon,\delta)$-differential privacy to machine learning. Generally speaking, most studies satisfy differential privacy by introducing additive-noise mechanisms. Specifically, existing methods introduce noise/randomness to the training data, the gradients when stochastic gradient descent is used to learn a model, the loss function, and/or the model itself. For instance, Chaudhuri et al.~\cite{chaudhuri2011differentially} proposed to add noise to the loss function and then minimize the noisy loss function using a standard optimization method. Kifer et al.~\cite{kifer2012private} improved the utility of such loss function based perturbation method. Several other methods~\cite{song2013stochastic,bassily2014private,abadi2016deep,wang2017differentially} proposed to add noise to the gradient in each iteration of gradient descent or stochastic gradient descent. For instance, Abadi et al.~\cite{abadi2016deep} proposed  DPSGD, which introduces well-calibrated Gaussian noise to the gradient computed from a random batch of the training data in each iteration when using stochastic gradient descent to learn a model. Moreover, they
proposed moments accountant, which is a stronger accounting method to track the privacy loss for adding Gaussian noise than the standard composition theorem.  
Jordan et al.~\cite{jordon2018pate} proposed a method for the generator in generative adversarial networks~\cite{goodfellow2014generative} to generate synthetic data for training, which provides privacy guarantee for the original training dataset. Papernot et al.~\cite{papernot2016semi,papernot2018scalable} developed the PATE~\cite{papernot2016semi,papernot2018scalable} framework, which trains multiple teacher models on pre-defined disjoint chunks of the sensitive training data and distills the teacher models to a student model using public non-sensitive data in a privacy-preserving way. The student model satisfies $(\epsilon,\delta)$-differential privacy. Jordan et al.~\cite{jordon2019differentially}  introduced a variant of PATE to improve the student model's accuracy by dividing the sensitive data several times (rather than just once in PATE) and learning teacher models on each chunk within each division. Note that PATE and its variant~\cite{jordon2019differentially} divide the sensitive training data to chunks, which is different from Bagging that randomly creates subsamples. Moreover, our results essentially show that if PATE trains the teacher models using randomly created subsamples, then the teacher models (they can be treated as base models in Bagging) already satisfy  $(\epsilon,\delta)$-differential privacy.


\section{$(\epsilon,\delta)$-Differential Privacy of Bagging}
\label{sec:methodology}

In this section, we first intuitively explain why the randomness of the re-sampling process in Bagging may satisfy differential privacy. Then, we formally derive the $(\epsilon,\delta)$-differential privacy of Bagging in different cases and prove the tightness of our derived privacy guarantee bounds. Due to the space limitation, we put our detailed proofs in the Supplementary Material.  

Roughly speaking, a machine learning method satisfies differential privacy if the learnt model does not change much when adding or removing one example in the training data. Suppose Bagging randomly samples $k$ examples with replacement from the training dataset to create one subsample and trains one base model. When the size of the training dataset is large, Bagging is unlikely to sample the example, which is added or removed from the original training dataset. Specifically, when the size of the training dataset is $n$, Bagging {with replacement} would not select that  added or removed example with probability of $\left(\frac{n-1}{n}\right)^{k}$ when creating a subsample. Therefore, adding or removing one example in the training dataset is unlikely to substantially affect the base model, making Bagging differentially private. In the following, we formally analyze the $(\epsilon,\delta)$-differential privacy guarantees of Bagging.  Table \ref{epsilon,delta} summarizes our derived $(\epsilon,\delta)$-differential privacy guarantees  for Bagging with and without replacement. 

\begin{table}[!t]
 \renewcommand\arraystretch{1.5} 
 \captionsetup{justification=centering}
  \caption{Our derived tight $(\epsilon,\delta)$-differential privacy guarantees for Bagging.  $n$ is the training dataset size, $k$ is the subsample size, and $N$ is the number of base models. }
  \label{epsilon,delta}
  \centering
  \begin{tabular}{ccc}
    \toprule
      &  $\epsilon$ & $\delta$   \\
    \midrule
    Bagging \emph{with replacement} & $N\cdot k \cdot \ln{\frac{n+1}{n }}$ &$1- (\frac{n-1}{n})^{N\cdot k}$ \\
    Bagging \emph{without replacement} & $\ln{\frac{n+1}{n+1-N\cdot k}}$ & $\frac{N\cdot k}{n}$\\
    \bottomrule
  \end{tabular}
\end{table}

\begin{theorem}
[$(\epsilon,\delta)$-differential privacy of Bagging {with replacement} when $N=1$]
\label{bagging_dp_theorem_2.2}
 Given a training dataset of size $n$ and an arbitrary base learner, Bagging with replacement achieves $(k \cdot \ln{\frac{n+1}{n}},1- (\frac{n-1}{n})^k)$-differential privacy when training one base model, where $k$ is the subsample size.
\end{theorem}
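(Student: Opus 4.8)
The plan is to factor the mechanism as $\mathcal{M} = \mathcal{A} \circ \mathcal{S}$, where $\mathcal{S}$ is the randomized map that, on a dataset of size $m$, outputs the size-$k$ multiset obtained by $k$ i.i.d.\ uniform draws (with replacement), and $\mathcal{A}$ is the arbitrary, possibly randomized, base learner. Since the base-learner step is pure post-processing of $\mathcal{S}$'s output, Proposition~\ref{post-processing} reduces everything to proving that $\mathcal{S}$ alone satisfies $(k\ln\frac{n+1}{n},\,1-(\frac{n-1}{n})^k)$-differential privacy for every adjacent pair one of whose members has size $n$; the guarantee for $\mathcal{M}$ then follows verbatim.

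The technical heart is a single conditional-distribution identity. Fix an adjacent pair, let $A$ be the larger dataset (of size $m$) and $z$ the one example it has that the smaller $B$ lacks. Because the $k$ draws defining $\mathcal{S}(A)$ are i.i.d.\ uniform on $A$, conditioning on the event ``$z$ is never drawn'' turns them into i.i.d.\ uniform draws on $B$; hence $\mathcal{S}(A)$ conditioned on that event is distributed exactly as $\mathcal{S}(B)$, and the event has probability $(\frac{m-1}{m})^k$. Equivalently, $\Pr(\mathcal{S}(A)=T) = (\frac{m-1}{m})^k\,\Pr(\mathcal{S}(B)=T)$ for every multiset $T$ avoiding $z$, while $\Pr(\mathcal{S}(B)=T)=0$ whenever $z\in T$. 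From this I would read off, for every event $\mathcal{T}$ on multisets: (i) $\Pr(\mathcal{S}(A)\in\mathcal{T}) \le \Pr(\mathcal{S}(B)\in\mathcal{T}) + \bigl(1-(\tfrac{m-1}{m})^k\bigr)$, by splitting on whether $z$ is drawn; (ii) $\Pr(\mathcal{S}(A)\in\mathcal{T}) \ge (\tfrac{m-1}{m})^k\,\Pr(\mathcal{S}(B)\in\mathcal{T})$, so the ``larger-by-smaller'' comparison is trivial; and (iii) $\Pr(\mathcal{S}(B)\in\mathcal{T}) \le \min\bigl((\tfrac{m}{m-1})^k\,\Pr(\mathcal{S}(A)\in\mathcal{T}),\,1\bigr)$.

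It then remains to substitute the two adjacency cases. When the size-$n$ dataset is the \emph{smaller} member ($m=n+1$), bound (ii) gives $\epsilon = k\ln\frac{n+1}{n}$ with no $\delta$, and bound (i) gives $\delta = 1-(\frac{n}{n+1})^k \le 1-(\frac{n-1}{n})^k$ since $\frac{n}{n+1}\ge\frac{n-1}{n}$. When the size-$n$ dataset is the \emph{larger} member ($m=n$), bound (i) directly yields the $\delta$-side inequality with $\delta = 1-(\frac{n-1}{n})^k$, while for the reverse inequality I would start from (iii): used naively it only certifies the worse exponent $k\ln\frac{n}{n-1}$, so one must trade part of the $\delta$ budget. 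Concretely, one checks $\min\bigl((\frac{n}{n-1})^k x,\,1\bigr) \le (\frac{n+1}{n})^k x + \bigl(1-(\frac{n-1}{n})^k\bigr)$ for all $x\in[0,1]$ by a short case split at $x=(\frac{n-1}{n})^k$, whose only nontrivial step collapses to the elementary inequality $\frac{n-1}{n}\le\frac{n^2-1}{n^2}$. Assembling the four sub-cases and invoking post-processing finishes the proof.

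I expect the last point to be the only real obstacle: the conditional-distribution identity, used in the obvious direction, under-delivers the exponent when the sensitive dataset is the larger one, and one has to notice that combining the trivial bound $\Pr(\mathcal{S}(B)\in\mathcal{T})\le 1$ with a carefully chosen slice of the $\delta$ budget recovers the claimed exponent $k\ln\frac{n+1}{n}$ — and that the required slice is precisely $1-(\frac{n-1}{n})^k$, which is also exactly what the other three sub-cases demand, so nothing is lost. The reduction via post-processing, the i.i.d.\ conditioning argument, and the add-versus-remove bookkeeping are all routine.
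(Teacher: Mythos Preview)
Your proposal is correct and follows essentially the same route as the paper: factor the mechanism as subsampling followed by the base learner, prove the $(\epsilon,\delta)$ bound for the subsampling step alone, and invoke post-processing. Your conditional-distribution identity (``condition on the extra element $z$ never being drawn'') is exactly the paper's partition of the output space into $\Gamma_1$ (subsamples containing $z$) and $\Gamma_3$ (subsamples avoiding $z$), just phrased more cleanly; the ensuing probability estimates coincide line by line.

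One remark worth making: the paper only verifies the two ordered inequalities with the size-$n$ dataset on the \emph{left} of the DP inequality (its cases $|D'|=n-1$ and $|D'|=n+1$), whereas you additionally verify the two reverse directions. In particular, the case you flag as ``the only real obstacle'' --- bounding $\Pr(\mathcal{S}(B)\in\mathcal{T})$ by $(\tfrac{n+1}{n})^k\Pr(\mathcal{S}(A)\in\mathcal{T})+\delta$ when $B$ is the smaller (size $n-1$) dataset --- is a direction the paper simply does not treat. Your case split at $x=(\tfrac{n-1}{n})^k$ handling it is valid and does reduce to $\tfrac{n-1}{n}\le\tfrac{n^2-1}{n^2}$, so your argument is strictly more complete than the paper's on this point, not merely equivalent.
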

\begin{proof}
Please refer to Appendix~\ref{proof_of_bagging_dp_theorem_2.2} in Supplementary Material. 
\end{proof}

Note that our results are applicable to any base learner. One way to obtain the differential privacy guarantee of Bagging when training $N$ base models is to apply the standard composition theorem in Theorem~\ref{composition theorem}. In particular, based on the standard composition theorem of $(\epsilon,\delta)$-differential privacy,  Bagging {with replacement} achieves $\left(N \cdot k \cdot \ln{\frac{n+1}{n}},N \cdot \left( 1- (\frac{n-1}{n})^k\right)\right)$-differential privacy when training $N$ base models. However,  this differential privacy guarantee from the standard composition theorem is loose. In the following theorem, we show that Bagging with $N$ base models achieves better differential privacy guarantees than that indicated by the standard composition theorem. 

\begin{theorem}
[$(\epsilon,\delta)$-differential privacy of Bagging {with replacement} when $N \textgreater 1$]
\label{theorem-3}
 Given a training dataset of size $n$ and an arbitrary base learner, Bagging with replacement achieves $(N\cdot k \cdot \ln{\frac{n+1}{n }},1- (\frac{n-1}{n})^{N\cdot k})$-differential privacy when training $N$ base models, where $k$ is the subsample size.
\end{theorem}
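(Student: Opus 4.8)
The plan is to reduce the $N$-base-model case directly to the single-base-model case of Theorem~\ref{bagging_dp_theorem_2.2}, rather than invoking the composition theorem, which (as already noted in the text) only yields the loose $\delta = N\cdot(1-(\frac{n-1}{n})^k)$. The key observation is that generating $N$ independent subsamples of size $k$ with replacement from $D$ is distributionally identical to generating a single subsample of size $N\cdot k$ with replacement from $D$ and then partitioning it into $N$ consecutive blocks of size $k$: in both cases the $N\cdot k$ drawn examples are i.i.d.\ uniform on $D$, and the block structure is nothing more than a fixed, data-independent bookkeeping of which example is fed to which base learner. This is what replaces the $N$-fold union bound on $\delta$ by the single term $1-(\frac{n-1}{n})^{Nk}$.

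To turn this into a proof I would first make the equivalence precise: let $\mathcal{A}_{Nk}\colon D \mapsto (x_1,\dots,x_{Nk})$ be the mechanism that draws $N\cdot k$ examples i.i.d.\ uniformly with replacement from $D$ and returns them as an ordered tuple, and check that splitting its output into the blocks $(x_1,\dots,x_k),\dots,(x_{(N-1)k+1},\dots,x_{Nk})$ produces a tuple of $N$ subsamples with exactly the joint law of the $N$ subsamples created by Bagging with $N$ base models. Then observe that $\mathcal{A}_{Nk}$ is itself an instance of ``Bagging with replacement training one base model with subsample size $N\cdot k$'' in which the base learner is the identity map returning its training set; since Theorem~\ref{bagging_dp_theorem_2.2} holds for an \emph{arbitrary} base learner, instantiating it with subsample size $N\cdot k$ gives that $\mathcal{A}_{Nk}$ is $\bigl(N\cdot k\cdot\ln\frac{n+1}{n},\,1-(\frac{n-1}{n})^{Nk}\bigr)$-differentially private. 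Finally, let $f$ be the (possibly randomized) map that takes a length-$Nk$ tuple, splits it into the $N$ blocks above, runs the base learner on each block, and (if the full predictor is wanted) appends the majority-vote rule; since $f$ depends only on its input and on the base learner's own internal randomness — never on $D$ — the $N$-base-model Bagging mechanism equals $f\circ\mathcal{A}_{Nk}$ in distribution, so Proposition~\ref{post-processing} transfers the same $(\epsilon,\delta)$ guarantee to it. (Equivalently, and even more directly, one may simply apply Theorem~\ref{bagging_dp_theorem_2.2} with subsample size $N\cdot k$ and with $f$ itself as the ``base learner,'' since arbitrary base learners are allowed.)

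The only step with genuine content — and the one I expect to require the most care — is the distributional equivalence of the first two paragraphs: verifying that ``$N$ independent size-$k$ with-replacement draws'' and ``one size-$Nk$ with-replacement draw partitioned into fixed blocks'' induce the same joint distribution on the tuple of subsamples, and being explicit about treating a subsample as an ordered tuple so that block-splitting is well defined (viewing a subsample instead as a multiset only adds another harmless, data-independent post-processing step). Everything else is a direct appeal to Theorem~\ref{bagging_dp_theorem_2.2} and Proposition~\ref{post-processing}, with no new probability estimates needed; that is precisely why this route improves on the $N$-fold composition bound.
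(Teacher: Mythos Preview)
Your proposal is correct and is essentially the same argument as the paper's: sample $N\cdot k$ examples with replacement, invoke Theorem~\ref{bagging_dp_theorem_2.2} at subsample size $N\cdot k$ to get the stated $(\epsilon,\delta)$, and then treat the block-splitting plus base-learner training as post-processing via Proposition~\ref{post-processing}. You spell out the distributional equivalence and the ordered-tuple bookkeeping more carefully than the paper does, but the route is identical.
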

\begin{proof}
We first sample $N \cdot k$ examples from the training dataset uniformly at randomly {with replacement}. Based on the proof of Theorem~\ref{bagging_dp_theorem_2.2},  the  sampled $N \cdot k$ examples achieve $\left(N\cdot k \cdot \ln{\frac{n + 1}{n}}, 1 - (\frac{n-1}{n})^{N\cdot k}\right)$-differential privacy. Then, we can evenly divide the $N\cdot k$ examples to $N$ subsamples and train $N$ base models. Note that we can view training the $N$ base models as  post-processing of the $N \cdot k$ examples, which does not incur extra privacy loss based on Proposition~\ref{post-processing}. Therefore, the entire process achieves  $(N\cdot k \cdot \ln{\frac{n + 1}{n}}, 1 - (\frac{n-1}{n})^{N\cdot k})$-differential privacy.
\end{proof}

If Bagging randomly samples $k$ examples from the training dataset {without replacement} to create each subsample, then Bagging has the following differential privacy guarantee: 
\begin{theorem}
[$(\epsilon,\delta)$-differential privacy of Bagging {without replacement}]
\label{theorem4}
 Given a training dataset of size $n$ and an arbitrary base learner,  Bagging without replacement achieves $(\ln{\frac{n+1}{n+1-N\cdot k}},$\\$ \frac{N\cdot k}{n})$-differential privacy when training $N$ base models, where $k$ is the subsample size.
\end{theorem}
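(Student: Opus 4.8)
The plan is to mirror the proof of Theorem~\ref{theorem-3}: rather than composing the $N$ base models, realize the whole subsampling step as a single draw and then invoke post-processing (Proposition~\ref{post-processing}), so that it suffices to analyze one combined draw of $N\cdot k$ examples from $D$. Concretely, I would first argue that producing the $N$ subsamples — and hence, after running the (possibly randomized) base learner and taking the majority vote, the final model — is a post-processing of a single uniform draw of $N\cdot k$ examples \emph{without replacement} from $D$, split into $N$ blocks of size $k$; the base learner and aggregation use no randomness that depends on $D$ given the subsamples. By Proposition~\ref{post-processing} the privacy of Bagging without replacement is then no worse than the privacy of this single $N\cdot k$-element draw.

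It remains to show that a uniformly random size-$(N\cdot k)$ subset $U$ of a size-$n$ dataset $D$ is $\left(\ln\frac{n+1}{n+1-N\cdot k},\frac{N\cdot k}{n}\right)$-differentially private, which I would do by the same two-case analysis on the two directions of adjacency that underlies Theorem~\ref{bagging_dp_theorem_2.2}, with $k$ replaced by $N\cdot k$. If $D'=D\cup\{x\}$ with $|D|=n$, then every outcome of the draw on $D$ is a valid outcome on $D'$, and its likelihood ratio is exactly $\binom{n+1}{N\cdot k}/\binom{n}{N\cdot k}=\frac{n+1}{n+1-N\cdot k}$, so $\Pr(U(D)\in S)\le \frac{n+1}{n+1-N\cdot k}\,\Pr(U(D')\in S)$ with no additive slack. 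If $D'=D\setminus\{x\}$ with $|D|=n$, I would condition on whether $x\in U$: the event $x\in U$ has probability exactly $\frac{N\cdot k}{n}$, and conditioned on $x\notin U$ the draw on $D$ is a sub-probability version of the draw on $D'$ (likelihood ratio $\binom{n-1}{N\cdot k}/\binom{n}{N\cdot k}=\frac{n-N\cdot k}{n}\le 1$), giving $\Pr(U(D)\in S)\le \Pr(U(D')\in S)+\frac{N\cdot k}{n}$. Padding each case with the other (unused) parameter and taking $\epsilon=\ln\frac{n+1}{n+1-N\cdot k}$, $\delta=\frac{N\cdot k}{n}$ covers both adjacency directions, which proves the claim (implicitly for $N\cdot k\le n$, so that the bound is non-vacuous).

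The delicate point — and the place I expect the argument to need the most care — is the reduction in the first paragraph: when the $N$ subsamples are drawn \emph{independently}, so that an example may land in several of them, splitting one $N\cdot k$-element without-replacement draw into blocks does not reproduce their joint law exactly, since the blocks would be disjoint. To handle this I would instead bound the product distribution of the $N$ subsamples directly: the "bad'' event that the added/removed example $x$ appears in at least one subsample has probability $1-(1-\tfrac{k}{n})^N\le \tfrac{N\cdot k}{n}$ by a union bound over the $N$ subsamples, which supplies the $\delta=\tfrac{N\cdot k}{n}$ term; conditioned on its complement, the per-tuple likelihood ratio in the add direction is $(\tfrac{n+1}{n+1-k})^N$, which I would relax to $\tfrac{n+1}{n+1-N\cdot k}$ via Bernoulli's inequality $(1-\tfrac{k}{n+1})^N\ge 1-\tfrac{N\cdot k}{n+1}$ to land on the stated (clean, if slightly loose for $N>1$) form. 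Either route yields the same final guarantee $\left(\ln\frac{n+1}{n+1-N\cdot k},\frac{N\cdot k}{n}\right)$.
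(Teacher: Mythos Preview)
Your first two paragraphs are exactly the paper's proof: the paper first handles $N=1$ by the same two-case adjacency analysis with the same binomial-coefficient likelihood ratios $\binom{n+1}{k}/\binom{n}{k}=\frac{n+1}{n+1-k}$ and $\binom{n-1}{k}/\binom{n}{k}=\frac{n-k}{n}$, reads off $\epsilon=\ln\frac{n+1}{n+1-k}$ and $\delta=\frac{k}{n}$, and then for general $N$ samples $N\cdot k$ examples without replacement in one shot, splits them evenly into $N$ blocks, and invokes post-processing.

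Where you diverge is your third paragraph, and here you are being more careful than the paper. The paper does \emph{not} address the mismatch you flag: when Bagging without replacement is defined (as in the paper's Definition) by $N$ \emph{independent} size-$k$ draws, the ``draw $N\cdot k$ at once and split'' mechanism yields disjoint blocks and therefore a different joint law. The paper silently analyzes the latter and calls it Bagging without replacement. Your alternative route---union-bounding the event that the distinguished example lands in some subsample to get $\delta\le \frac{N\cdot k}{n}$, and using Bernoulli's inequality $(1-\tfrac{k}{n+1})^N\ge 1-\tfrac{N\cdot k}{n+1}$ to relax the add-direction ratio $\bigl(\tfrac{n+1}{n+1-k}\bigr)^N$ to $\tfrac{n+1}{n+1-N\cdot k}$---is correct and actually proves the stated bound for the independent-draws mechanism, which the paper's argument does not. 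So your proposal both recovers the paper's proof and patches a gap the paper leaves open.
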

\begin{proof}
Please see Appendix \ref{appendix-C} in Supplementary Material. 
\end{proof}

Next, we show that our derived privacy guarantees of Bagging are tight if no extra assumptions are made on the base learner. More specifically, if no assumptions on the base learner are made, it is impossible to derive a $\delta$ that is smaller than ours for Bagging.

\begin{theorem}
[Tightness of $\delta$ for Bagging with replacement]\label{tightness-theorem-3}
For any $\delta<1-\left(\frac{n-1}{n}\right)^{N \cdot k}$, there exists a base learner such that Bagging with replacement cannot satisfy $(\epsilon, \delta)$-differential privacy for any $\epsilon$.
\end{theorem}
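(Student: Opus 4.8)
The plan is to exploit the one situation a differential-privacy bound cannot escape: when the ``denominator'' probability $\Pr(\mathcal{M}(D')\in S)$ is exactly zero, the factor $e^{\epsilon}$ becomes irrelevant and $\delta$ alone must upper bound $\Pr(\mathcal{M}(D)\in S)$. So I would first fix a pair of adjacent datasets $D$ and $D'=D\setminus\{x^{*}\}$ with $|D|=n$ and $x^{*}\in D$, and then design a base learner that makes the presence of $x^{*}$ in a subsample perfectly observable from the trained base model.

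Concretely, let the base learner be the ``memorizing'' learner $\mathcal{A}$ that, on input subsample $D_{i}$, simply returns $D_{i}$ itself (viewed as an element of the model space $\mathcal{R}$); this is a legitimate, if degenerate, base learner, and it is exactly the kind of adversarial learner the tightness claim targets. Then Bagging with replacement on a dataset $D$ produces $\mathcal{M}(D)=(D_{1},\dots,D_{N})$, where the $N\cdot k$ sampled slots are i.i.d.\ uniform draws from $D$. Take $S=\{(M_{1},\dots,M_{N}): x^{*}\text{ appears in some }M_{i}\}$. I would then compute the two relevant probabilities: since $x^{*}\notin D'$, no draw from $D'$ can ever produce it, so $\Pr(\mathcal{M}(D')\in S)=0$; and since each of the $N\cdot k$ independent draws from $D$ misses $x^{*}$ with probability $\frac{n-1}{n}$, we get $\Pr(\mathcal{M}(D)\in S)=1-\left(\frac{n-1}{n}\right)^{N\cdot k}$.

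Finally I would plug these into the definition of $(\epsilon,\delta)$-differential privacy applied to the ordered pair $(D,D')$: any $(\epsilon,\delta)$-differentially private mechanism must satisfy $\Pr(\mathcal{M}(D)\in S)\le e^{\epsilon}\Pr(\mathcal{M}(D')\in S)+\delta$, which here collapses to $1-\left(\frac{n-1}{n}\right)^{N\cdot k}\le\delta$. Hence if $\delta<1-\left(\frac{n-1}{n}\right)^{N\cdot k}$, no value of $\epsilon$ can satisfy the inequality, which proves the claim.

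I do not expect a serious obstacle; the only points needing care are (i) arguing that the memorizing map is admissible as a base learner and that $S$ is a well-defined (measurable) event in whatever model space we adopt, which is immediate once the model space is allowed to contain size-$k$ multisets, and (ii) the combinatorial fact that the $N$ subsamples together amount to $N\cdot k$ independent uniform draws, so the event ``$x^{*}$ is never sampled'' has probability exactly $\left(\frac{n-1}{n}\right)^{N\cdot k}$ rather than something involving $N$ separately --- this is precisely where the improvement over the naive composition bound is mirrored on the lower-bound side.
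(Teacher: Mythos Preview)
Your proposal is correct and follows essentially the same strategy as the paper: construct adjacent datasets $D$ and $D'=D\setminus\{x^{*}\}$, use a base learner that exposes the subsample (the paper picks $k$-nearest neighbor, whose ``model'' is the subsample itself, which is exactly your memorizing learner), and observe that the event ``$x^{*}$ was sampled'' has probability $1-\bigl(\tfrac{n-1}{n}\bigr)^{N\cdot k}$ under $D$ and $0$ under $D'$, forcing $\delta\ge 1-\bigl(\tfrac{n-1}{n}\bigr)^{N\cdot k}$. The only cosmetic difference is that the paper first reduces to $N=1$ and then substitutes $N\cdot k$ for $k$, whereas you handle the $N\cdot k$ draws directly.
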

\begin{proof}
Our proof is based on constructing a counter-example base leaner that Bagging with replacement cannot achieve $(\epsilon,\delta)$-differential privacy for any $\epsilon$ when  $\delta< 1 - (\frac{n-1}{n})^{N \cdot k}$. See Appendix \ref{appendix-D} in Supplementary Material for more details.
\end{proof}

\begin{theorem}
[Tightness of $\delta$ for Bagging without replacement]
\label{tightness-theorem-4}
For any $\delta<\frac{N \cdot k}{n}$, there exists a base learner such that Bagging without replacement cannot satisfy $(\epsilon, \delta)$-differential privacy for any $\epsilon$.
\end{theorem}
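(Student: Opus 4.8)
The plan is to follow the same strategy as the tightness result for Bagging with replacement (Theorem~\ref{tightness-theorem-3}): exhibit one concrete base learner and one pair of adjacent datasets on which no value of $\epsilon$ can compensate for taking $\delta<\frac{N\cdot k}{n}$. The key structural observation is that $(\epsilon,\delta)$-differential privacy is destroyed by any output event that is reachable from one dataset but unreachable from an adjacent one with probability exceeding $\delta$: if $\mathrm{Pr}(\mathcal{M}(D)\in S)=p$ while $\mathrm{Pr}(\mathcal{M}(D')\in S)=0$, then the defining inequality of Definition~\ref{def-dp} collapses to $p\le e^{\epsilon}\cdot 0+\delta=\delta$ for \emph{every} $\epsilon\ge 0$. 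So it suffices to engineer a base learner for which some event $S$ has probability exactly $\frac{N\cdot k}{n}$ under one of two adjacent datasets and probability $0$ under the other.

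The construction I would use: fix a distinguished ``canary'' example $x^{\ast}$, let $D$ be any size-$n$ training set containing $x^{\ast}$, and let $D'=D\setminus\{x^{\ast}\}$ be the adjacent set obtained by deleting it. Since no assumption is placed on the base learner, take it to be the deterministic detector that, given a subsample, returns a fixed model $M_{1}$ whenever $x^{\ast}$ appears in that subsample and a fixed model $M_{0}$ otherwise. Let $S$ be the set of ensembles containing at least one copy of $M_{1}$; equivalently, $S$ is the event that Bagging's re-sampling drew $x^{\ast}$ into some subsample. On $D'$ the canary is simply absent, so $S$ is unreachable: $\mathrm{Pr}(\mathcal{M}(D')\in S)=0$. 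On $D$, $S$ occurs exactly when the without-replacement re-sampling of the $N$ subsamples (which collectively draw $N\cdot k$ distinct examples, so that $N\cdot k\le n$) picks $x^{\ast}$, and this probability is precisely $\frac{N\cdot k}{n}$ --- the same count that yields the $\delta$ term of Theorem~\ref{theorem4}. Since $\frac{N\cdot k}{n}>\delta$ while the probability on $D'$ is $0$, Definition~\ref{def-dp} is violated for all $\epsilon$, which is the claim.

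The only place needing genuine care is verifying that the canary-detection event has probability exactly $\frac{N\cdot k}{n}$ under $D$ (equivalently, at least any prescribed value strictly above $\delta$) and that the detector faithfully exposes it: the detector must be deterministic so that no additional randomness dilutes the event below $\frac{N\cdot k}{n}$, and one should make sure the worst-case adjacency direction --- deleting, rather than inserting, an element relative to the size-$n$ set --- is the one being used, since that is the direction in which the unreachable mass reaches $\frac{N\cdot k}{n}$. This is exactly the combinatorial identity already established in the proof of Theorem~\ref{theorem4}; everything else is immediate from the collapse $e^{\epsilon}\cdot 0=0$ and the fact that passing each subsample through the (deterministic) base learner is post-processing and adds no randomness (Proposition~\ref{post-processing}).
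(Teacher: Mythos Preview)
Your proposal is correct and follows essentially the same approach as the paper: exhibit adjacent datasets $D$ and $D'=D\setminus\{x^{\ast}\}$, observe that the event ``the subsample contains $x^{\ast}$'' has probability $\frac{N\cdot k}{n}$ under $D$ and $0$ under $D'$, and choose a base learner that exposes this event in the output. The only cosmetic difference is that the paper instantiates the base learner as $k$-nearest neighbor (whose ``model'' is literally the subsample, so the bad event $\Lambda_1$ passes through verbatim), whereas you use a binary canary detector; both choices serve the identical purpose of making the subsampling event visible after post-processing.
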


\begin{proof}
Please see Appendix \ref{appendix-E} in Supplementary Material.
\end{proof}


\section{Evaluation}
We compare Bagging with DPSGD~\cite{abadi2016deep} and PATE~\cite{papernot2018scalable} in different scenarios. 
We use the open-source implementations of DPSGD and PATE from their authors. 

\subsection{Experimental Setup}
\myparatight{Two cases} Privacy-preserving machine learning aims to train a model while protecting the privacy of a sensitive training dataset. Depending on whether we have access to a public non-sensitive dataset, we consider the following two cases. 
\begin{itemize}
    \item {\bf Case I:} 
    \textbf{No access to a public non-sensitive dataset.} In this case, we only have access to the sensitive training dataset. In this case, PATE is not applicable. Therefore, we compare Bagging and DPSGD in this case.
    \item {\bf Case II:}
    \textbf{Access to a public non-sensitive dataset.} In this case, we have access to a public non-sensitive dataset other than the sensitive training dataset. For instance, the sensitive training dataset could be CIFAR10 while the public non-sensitive dataset could be ImageNet. Therefore, we can leverage transfer learning to distill knowledge from the public non-sensitive dataset to boost the accuracy of the privacy-preserving model trained on the sensitive training dataset. In particular, we can first pretrain a model on the public dataset. Then, for DPSGD, we fine-tune the pretrained model on the sensitive training dataset using DPSGD. For Bagging, we fine-tune the pretrained model on subsamples of the sensitive training dataset to obtain the base models. For PATE, we train the teacher models and student model via fine tuning the pretrained model. Note that DPSGD and Bagging do not require the public non-sensitive dataset to have the same distribution as the sensitive training dataset. However, PATE further requires a public dataset that has the same distribution as the sensitive training dataset to train the student model. We call this public dataset \emph{same-distribution public dataset}. Therefore, PATE has stronger assumptions than DPSGD and Bagging.     
\end{itemize}

\myparatight{Datasets and models} We discuss our datasets and models for \textbf{Case I} and \textbf{Case II} separately. 
\begin{itemize}

\item {\bf Case I:} We adopt MNIST~\cite{lecun2010mnist} and CIFAR10~\cite{krizhevsky2009learning} as the sensitive training datasets. On MNIST, we use a simple convolutional neural network with two convolutional layers, each followed by a pooling layer, and a fully connected layer (Table~\ref{MNIST-structure} in Supplemental Material shows the details). For CIFAR10, we adopt the VGG16~\cite{simonyan2014very} architecture. 
\item {\bf Case II:} We still adopt  MNIST and CIFAR10 as the sensitive training datasets. When MNIST is the sensitive training dataset, we adopt Fashion-MNIST~\cite{xiao2017fashion} as the public non-sensitive dataset. When CIFAR10 is the sensitive training dataset, we assume ImageNet~\cite{deng2009imagenet} is the public non-sensitive dataset. PATE further requires a small same-distribution public dataset. For this purpose, we select the first 1,000 testing examples of  MNIST (or CIFAR10) as the same-distribution public dataset when training PATE's student models on MNIST (or CIFAR10). Note that DPSGD, Bagging, and PATE are all evaluated on the  remaining 9,000 testing examples of MNIST (or CIFAR10) in \textbf{Case II}. On Fashion-MNIST, we pretrained a convolutional neural network, which has the same architecture as the model in \textbf{Case I} for MNIST. Moreover, we adopt the pretrained VGG16 model\footnote{https://github.com/keras-team/keras-applications/blob/master/keras\_applications/vgg16.py}~\cite{simonyan2014very} for the ImageNet dataset. DPSGD, PATE, and Bagging fine-tune these pretrained models on the sensitive training dataset following the standard fine-tuning procedure. In particular, we replace the last fully connected layer of a pretrained model as a new one that has the same number of classes as the sensitive training dataset. We then fine-tune the model using a learning rate that is 10 times smaller than that when training from scratch. 
\end{itemize}

\myparatight{Parameter settings}  We set training epochs=100 for both Bagging and DPSGD in both \textbf{Case I} and \textbf{Case II}. We adopt Bagging {with replacement} and $N=1$ as the default setting in \textbf{Case I }and \textbf{Case II}. For PATE~\cite{papernot2018scalable}, we set the number of teachers to be 250, and both teacher and student models  are trained for 1,000 epochs. Bagging, DPSGD, and PATE have different ways to control $\epsilon$ and $\delta$. Next, we describe how to set their parameters to achieve a target level of $\epsilon$ and $\delta$.

\begin{itemize}
    \item Bagging. Given the training dataset size $n$ and subsample size $k$, we can calculate the privacy budget $(\epsilon,\delta)$ of Bagging based on Table~\ref{epsilon,delta}.
    \item DPSGD~\cite{abadi2016deep}. We vary the standard deviation $\sigma$ of the Gaussian noise  used by DPSGD to achieve a target level of $\epsilon$ and $\delta$. 
    \item PATE~\cite{papernot2018scalable}.  For PATE with the Confident-GNMax aggregation mechanism~\cite{papernot2018scalable}, threshold T and noise parameter $\sigma_{1}$ are used for privately checking  consensus of the teachers' predictions. Gaussian noise standard deviation $\sigma_{2}$ is used for the usual max-of-Gaussian~\cite{papernot2018scalable}. Following the authors of PATE, we set T$=200$, $\sigma_{1}=150$, and $\sigma_{2}=40$. We then vary the  number of queries answered by the Confident-GNMax aggregator to achieve a target privacy budget $\epsilon$ and $\delta$. We found that PATE cannot reach the small $\epsilon$ we set for Bagging and DPSGD, so we  relax its $\epsilon$ to be around 1, which is much larger than the $\epsilon$ used by Bagging and DPSGD. In other words, we give additional advantages for PATE. 
\end{itemize}

\subsection{Experimental Results}
We first compare Bagging with DPSGD and PATE. Then, we evaluate different variants of Bagging.  

\myparatight{Comparison results in Case I}
Table \ref{MNIST-1} and \ref{CIFAR10-1} show the testing accuracies of DPSGD and Bagging (with replacement and $N=1$) for different privacy budgets in Case I on MNIST and CIFAR10, respectively. The column ``No Privacy'' corresponds to models without privacy guarantees. First,  we observe that Bagging achieves significantly higher testing accuracies than DPSGD under the same privacy budget. Second, increasing the subsample size $k$ in Bagging is equivalent to decreasing the Gaussian noise scale in DPSGD, which provides weaker privacy guarantees and trains models with higher accuracies. 

\begin{table}[!t]
  \caption{ Comparison results on MNIST in Case I.}
  \label{MNIST-1}
  \centering
  
  \begin{tabular}{lllllll}
    \toprule
    \multicolumn{2}{c}{Privacy Budget} &
   \multicolumn{3}{c}{Accuracy}     &     \multicolumn{2}{c}{Parameter Setting}           \\
    \cmidrule(r){1-2} \cmidrule(r){3-5}   \cmidrule(r){6-7}
     $\epsilon$ & $\delta$     & No Privacy &  DPSGD~\cite{abadi2016deep}    & Bagging  &  $\sigma$ in DPSGD & $k$ in Bagging   \\
    \midrule
    0.005&0.005& -&16.41\%&\textbf{90.66\%} &200&300\\
    0.0083&0.0083&- &19.76\%&\textbf{93.81\%}& 150&500\\
    0.017&0.017&- &39.73\%&\textbf{94.60\%}&100&1000\\
    0.083&0.08& -&87.91\%&\textbf{97.68\%} &19&5000\\
    0.167&0.154&- &91.95\%&\textbf{98.28\%} &8&10000\\
    \midrule
    -&-&99.1\%&-&-&-&-\\
    \bottomrule
  \end{tabular}
\end{table}

\begin{table}[!t]
  \caption{Comparison results on CIFAR10 in Case I.}
  \label{CIFAR10-1}
  \centering
  \begin{tabular}{lllllll}
    \toprule
    \multicolumn{2}{c}{Privacy Budget} &
   \multicolumn{3}{c}{Accuracy}     &     \multicolumn{2}{c}{Parameter Setting}           \\
    \cmidrule(r){1-2} \cmidrule(r){3-5}   \cmidrule(r){6-7}
     $\epsilon$ & $\delta$     & No Privacy &  DPSGD~\cite{abadi2016deep}    & Bagging  &  $\sigma$ in DPSGD & $k$ in Bagging   \\
    \midrule
    0.02&0.02&-&12.74\%&\textbf{41.63\%} &72&1000\\
    0.1&0.095&-&15.78\%&\textbf{59.76\%} &11.5&5000\\
    0.2&0.181&- &27.86\%&\textbf{65.25\%}&4.9&10000\\
    0.4&0.33&-&40.96\%&\textbf{70.75\%} &2.18&20000\\
    0.6&0.45&-&46.53\%&\textbf{73.95\%} &1.4&30000\\
    \midrule
    -&-&80.82\%&-&-&-&-\\
    \bottomrule
  \end{tabular}
\end{table}

\myparatight{Comparison results in Case II}
Table \ref{MNIST-2} and \ref{CIFAR10-2} respectively show the testing accuracies of DPSGD, PATE, and Bagging for different privacy budgets in \textbf{Case II} on MNIST and CIFAR10, respectively. We have two observations. First, Bagging achieves significantly higher testing accuracies than DPSGD and PATE, even if PATE has weaker privacy guarantees. Second,  via comparing Table \ref{CIFAR10-1} and Table \ref{CIFAR10-2}, we observe that Bagging achieves better accuracies in \textbf{Case II} than \textbf{Case I} for CIFAR10 when the public non-sensitive dataset is ImageNet, which means that transferring knowledge from a public non-sensitive dataset does improve accuracy of the model trained on the sensitive training dataset. DPSGD also achieves higher accuracies in \textbf{Case II} for CIFAR10 when the privacy budgets are larger than some threshold (e.g., 0.02). However, we didn't observe such accuracy improvement for MNIST in \textbf{Case II} when the public non-sensitive dataset is Fashion-MNIST. In fact, based on Table~\ref{MNIST-1} and~\ref{MNIST-2}, testing accuracies of DPSGD and Bagging may even decrease in \textbf{Case II}. We suspect the reason may be that the pretrained model for Fashion-MNIST is much simpler than that for ImageNet, which does not extract meaningful features.

\begin{table}[!t]
  \caption{Comparison results on MNIST in Case II.}
  \label{MNIST-2}
  \centering
  
  \begin{tabular}{lllll}
    \toprule
    Method &  $\epsilon$ &  $\delta$ &
   Accuracy     &     Parameter Setting        \\
       \midrule
    No Privacy & - & - & 98.83\% & - \\
    \midrule
    DPSGD~\cite{abadi2016deep}&0.008&0.008&11.17\%& $\sigma$=150\\
    PATE~\cite{papernot2018scalable}& 0.57 & 0.008 & 74.37\% & Quiries Answered by Aggregator= 100\\
    Bagging & 0.008 &0.008&\textbf{90.20\%}& $k=500$\\
    \midrule
    DPSGD~\cite{abadi2016deep}&0.017&0.017&14.46\%& $\sigma$=100\\
    PATE~\cite{papernot2018scalable}& 0.66 & 0.017 &  78.42\% & Quiries Answered by Aggregator = 150 \\
    Bagging & 0.017 &0.017&\textbf{93.59\%}& $k=1000$\\
    \midrule
    DPSGD~\cite{abadi2016deep}&0.08&0.08&81.43\% & $\sigma$=19\\
    PATE~\cite{papernot2018scalable}& 0.63 & 0.08 & 83.31\% & Quiries Answered by Aggregator = 180 \\
    Bagging & 0.08 &0.08&\textbf{96.87\%}& $k=5000$\\
    \bottomrule
  \end{tabular}
\end{table}

\begin{table}[!t]
  \caption{Comparison results on CIFAR10 in Case II.}
  \label{CIFAR10-2}
  \centering
  
  \begin{tabular}{lllll}
    \toprule
    Method &  $\epsilon$ &  $\delta$ &
   Accuracy     &     Parameter Setting        \\
       \midrule
    No Privacy & - & - & 87.90\% & - \\
    \midrule
    DPSGD~\cite{abadi2016deep}&0.02&0.02&11.97\%& $\sigma$=72     \\
    PATE~\cite{papernot2018scalable}& 1.12 & 0.02 & 37.8\% & Quiries Answered by Aggregator = 100 \\
    Bagging & 0.02 &0.02&\textbf{62.22\%}& $k=1000$\\
    \midrule
    DPSGD~\cite{abadi2016deep}& 0.1 &0.095&21.08\% & $\sigma$=11.5\\
    PATE~\cite{papernot2018scalable}& 1.02 & 0.095 & 40.86\% &Quiries Answered by Aggregator= 130 \\
    Bagging & 0.1 &0.095&\textbf{75.67\%}& $k=5000$\\
    \midrule
    DPSGD~\cite{abadi2016deep}&0.2&0.18&30.63\% & $\sigma$=4.9\\
    PATE~\cite{papernot2018scalable}& 1.03 & 0.18 & 42.55\% &Quiries Answered by Aggregator = 170 \\
    Bagging & 0.2 &0.18&\textbf{79.55\%}& $k=10000$\\
    \bottomrule
  \end{tabular}
\end{table}

\myparatight{Impact of $k$ and $N$ on Bagging}
Figure \ref{fig:1} shows the testing accuracy of Bagging in \textbf{Case II} as we train more base models. 
We fix $N \cdot k$ in each curve in the graphs, so each curve has the same privacy budget independent of the number of base models. We observe that, given the same privacy budget, Bagging has lower accuracies when training more base models.  The reason is that, given the same privacy budget, training more base models means that each base model is trained using less examples and thus less accurate. 

\begin{figure*}[!t]
\centering
\subfloat[MNIST]{\includegraphics[width=0.43\textwidth]{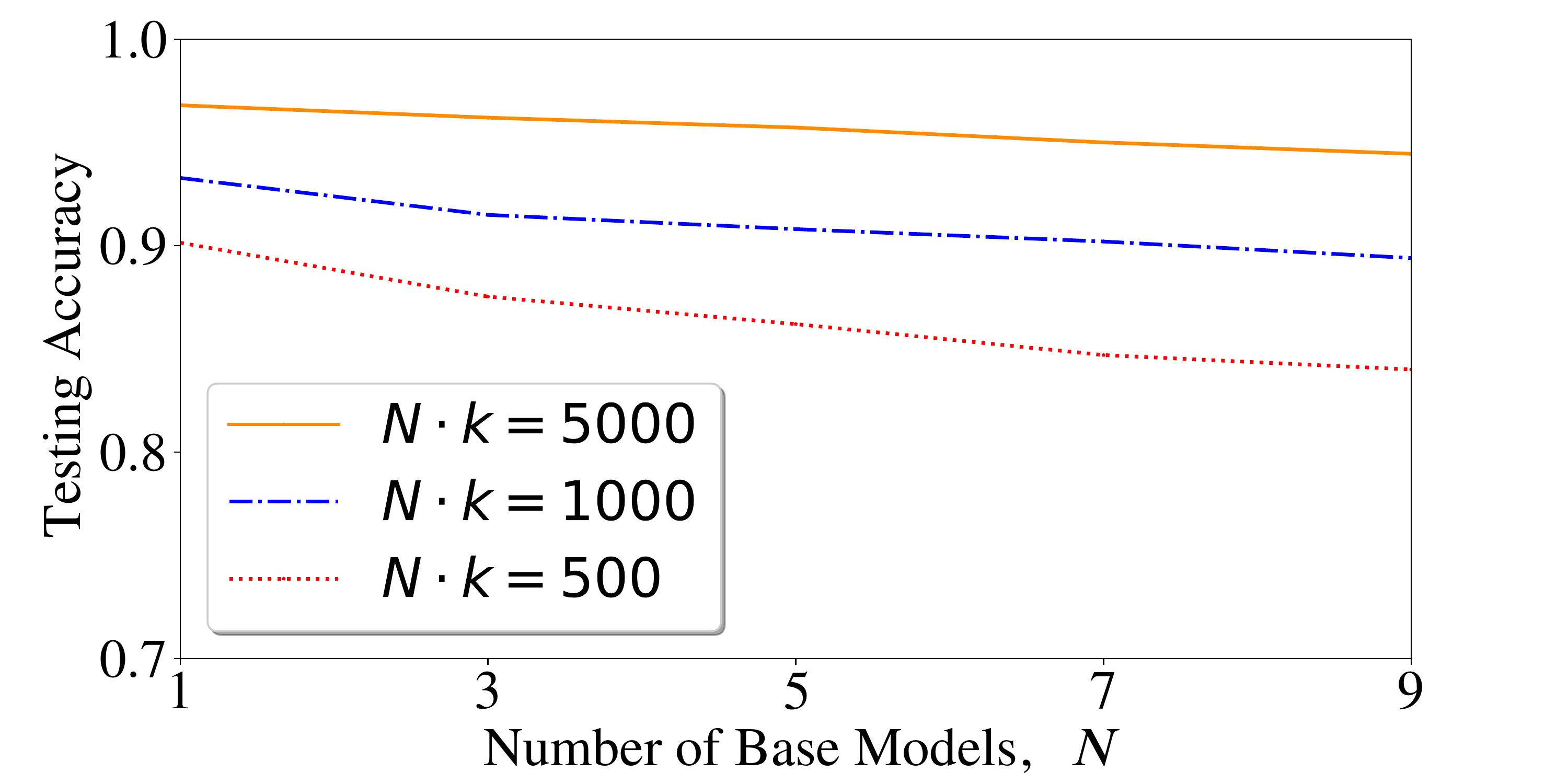}}
\subfloat[CIFAR10]{\includegraphics[width=0.43\textwidth]{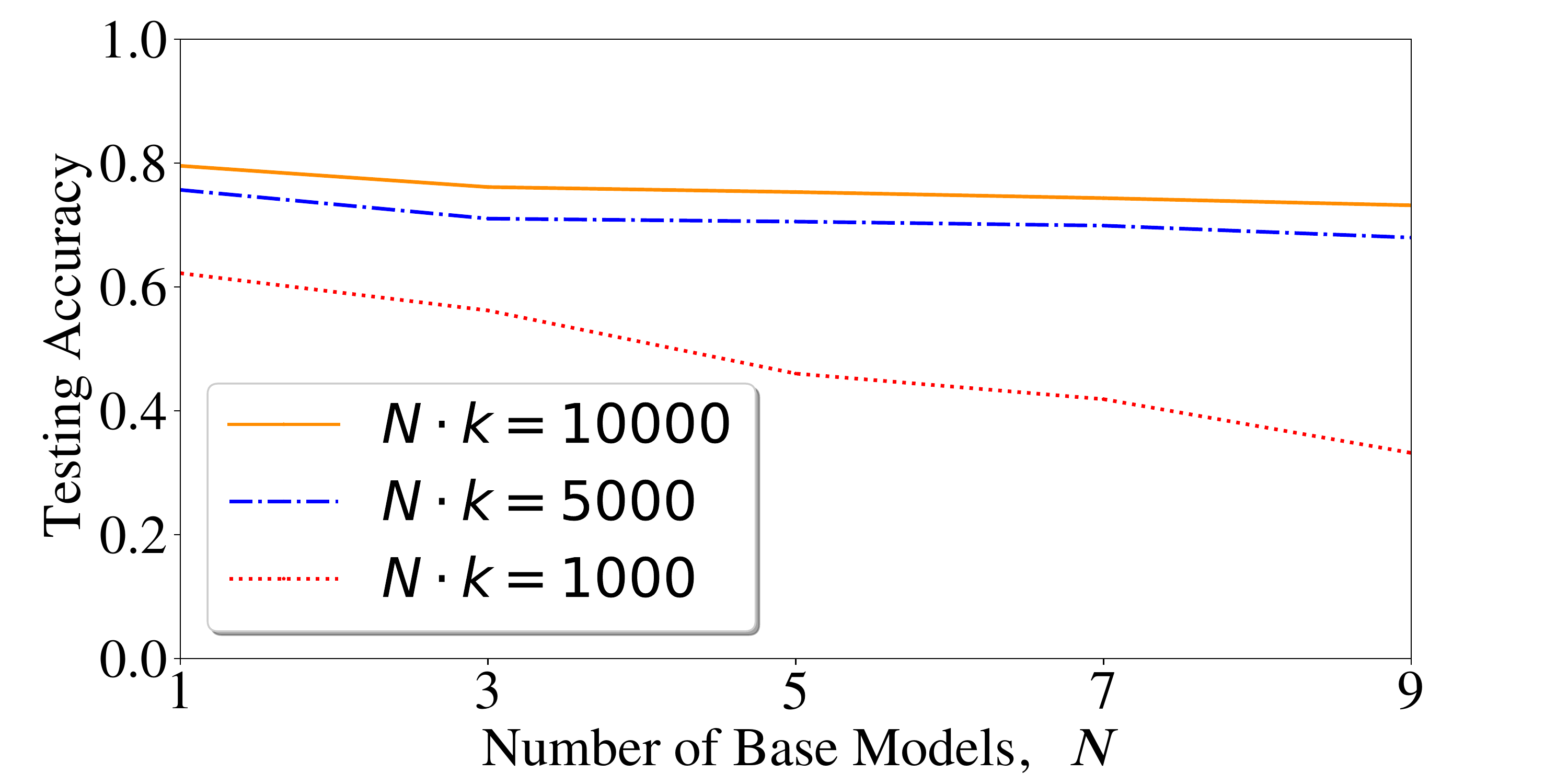}}
\caption{Testing accuracy of Bagging when training $N$ base models, where the privacy budget $N\cdot k$ is fixed.}
\label{fig:1}
\end{figure*}

\myparatight{Bagging with vs. without replacement}
Figure \ref{fig:2} shows the testing accuracy of Bagging with vs. without replacement in \textbf{Case II} as we vary the subsample size $k$, where $N=1$. The same subsample size $k$ has very close privacy budgets $(\epsilon,\delta)$ for Bagging with and without replacement, according to Table~\ref{epsilon,delta}. Our results show that  with or without replacement has negligible  impact on Bagging. 

\begin{figure*}[!t]
\centering
\subfloat[MNIST]{\includegraphics[width=0.43\textwidth]{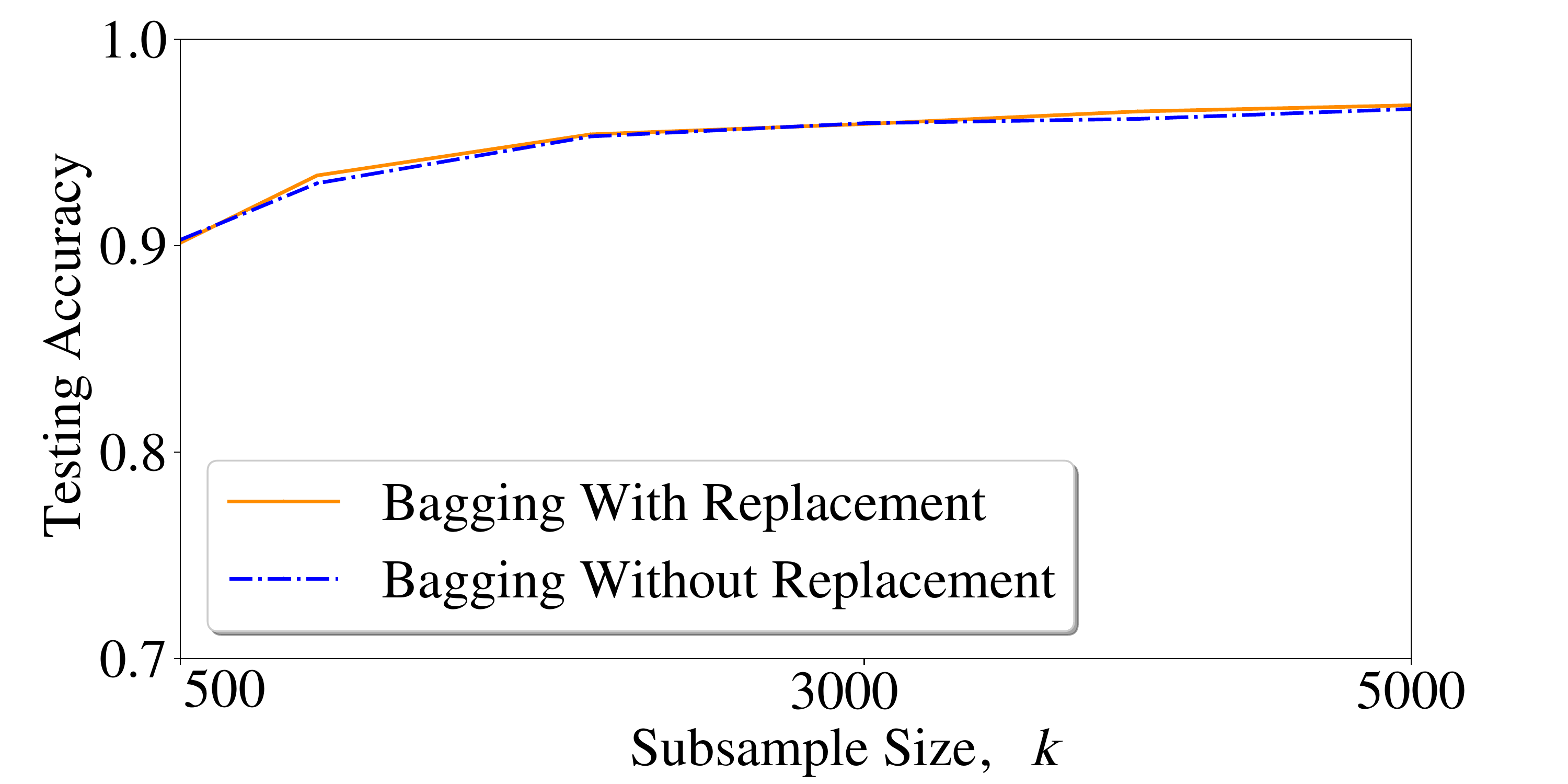}}
\subfloat[CIFAR10]{\includegraphics[width=0.43\textwidth]{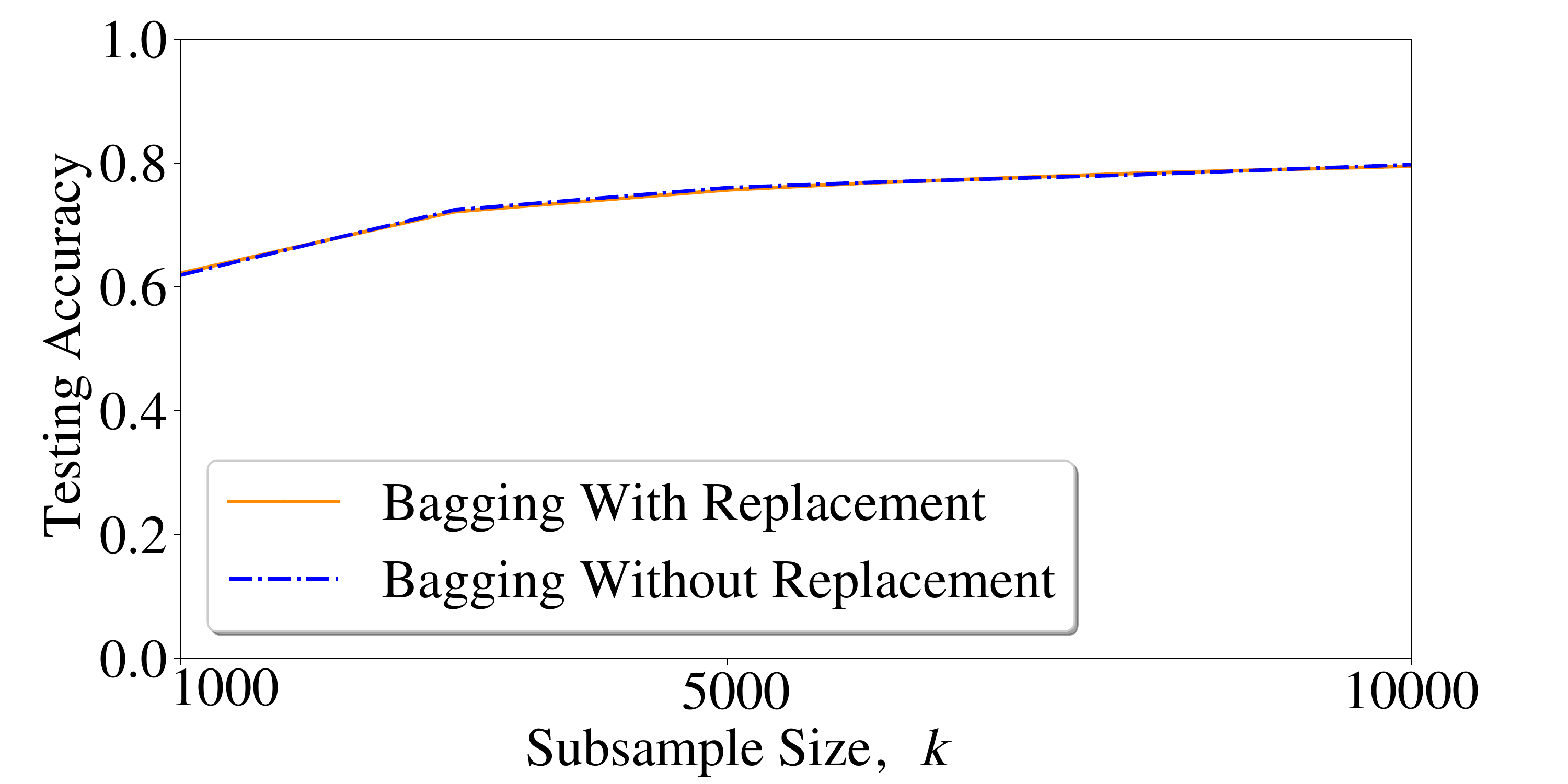}}
\caption{Bagging with replacement vs. Bagging without replacement.}
\label{fig:2}
\end{figure*}


\section{Conclusion}
\label{sec:conclusion}
In this work, we study the intrinsic $(\epsilon,\delta)$-differential privacy of Bagging. Our major theoretical results show that Bagging's intrinsic randomness from  subsampling  already makes Bagging differentially private without the needs of additional noise. We derive the $(\epsilon,\delta)$-differential privacy guarantees for Bagging with and without replacement. Moreover, we  prove that if no assumptions about the base learner are made, our derived privacy guarantees are tight. We empirically evaluate Bagging on MNIST and CIFAR10. Our experimental results demonstrate that {Bagging} achieves significantly higher accuracies than  state-of-the-art differentially private machine learning methods with the same privacy budget.


\bibliographystyle{plain}
\bibliography{refs}


\newpage
\appendix
\section{Proof of Theorem~\ref{bagging_dp_theorem_2.2}}
\label{proof_of_bagging_dp_theorem_2.2}

We first show the process of subsampling in Bagging with replacement achieves $(k\cdot \ln{\frac{n + 1}{n}}, 1 - (\frac{n-1}{n})^k)$-differential privacy. Then we can view Bagging's training of the base model as post-processing of subsampling so that Bagging with replacement achieves $(k\cdot \ln{\frac{n + 1}{n}}, 1 - (\frac{n-1}{n})^k)$-differential privacy.

For simplicity, we use operator $b$ to denote the subsampling operation of Bagging with replacement. In particular, given the training dataset $D$ and its adjacent dataset $D^{\prime}$, $b(D)$ and $b(D^{\prime})$ denote a subsample of $k$ examples sampled from $D$ and $D^{\prime}$ uniformly at random with replacement, respectively. We use $\Phi$ to denote the joint domain of $b(D)$ and $b(D^{\prime})$. Specifically, each element $\phi \in \Phi$ is a subsample of $k$ examples. Given $b(D)$ and $b(D^{\prime})$, we denote the following three subsets of domain $\Phi$: 
\begin{align}
   &  \Gamma_{1} = \{\phi\in \Phi| \text{Pr}(b(D)=\phi )>0, \text{Pr}( b(D^{\prime})=\phi ) = 0\}, \\
   &  \Gamma_{2} = \{\phi\in \Phi| \text{Pr}(b(D)=\phi )=0, \text{Pr}( b(D^{\prime})=\phi ) > 0\}, \\
   &  \Gamma_{3} = \{\phi \in \Phi| \text{Pr}(b(D)=\phi )>0, \text{Pr}( b(D^{\prime})=\phi ) > 0\}.
\end{align}

Intuitively, $\Gamma_{1}$ includes all subsamples that can only be derived from $D$. $\Gamma_{2}$ includes all subsamples that can only be derived from $D^{\prime}$. $\Gamma_{3}$ includes all subsamples that can be derived from $D$ and $D^{\prime}$. We note that $\Gamma_{1}\cup \Gamma_{2} \cup \Gamma_{3}$ constitutes all possible subsamples of $b(D)$ and $b(D^{\prime})$. Since $D^{\prime}$ is an adjacent dataset of $D$ and $D$ has size $n$, $D^{\prime}$ has size $n \pm 1$. For simplicity, we use $|D|$ and $|D^{\prime}|$ to denote the size of $D$ and $D^{\prime}$, respectively. Note that there is just one example which is different between $D$ and $D^{\prime}$. Therefore, we have the following:
\begin{align}
    & |D\cap D^{\prime}| = n-1, \text{ if } |D^{\prime}|=n-1, \\
     & |D\cap D^{\prime}| = n, \text{ if } |D^{\prime}|=n+1.
\end{align}

In other words, the number of overlapping examples between $D$ and $D^{\prime}$ are $n-1$ and $n$ when the size of $D^{\prime}$ is $n-1$ and $n+1$, respectively. Since Bagging with replacement randomly samples $k$ training examples uniformly at random with replacement, we have the following: 
\begin{align}
 &   \text{Pr}(b(D)=\phi )= 
    \begin{cases}
     \frac{1}{(|D|)^{k}}, &\text{ if } \phi \in \Gamma_1 \cup \Gamma_3, \\
     0, &\text{ otherwise}. 
    \end{cases} \\
&    \text{Pr}(b(D^{\prime})=\phi )= 
    \begin{cases}
     \frac{1}{(|D^{\prime}|)^{k}}, &\text{ if } \phi \in \Gamma_2 \cup \Gamma_3, \\
     0, &\text{ otherwise}. 
    \end{cases}
\end{align}

Moreover, the number of elements in $\Gamma_1$, $\Gamma_2$, and $\Gamma_3$ can be computed as $|D|^{k}-|D\cap D^{\prime}|^k$, $|D^{\prime}|^{k}-|D\cap D^{\prime}|^k$, and $|D\cap D^{\prime}|^k$, respectively.  
Given the probability density and the size of $\Gamma_1$, $\Gamma_2$, and $\Gamma_3$, we have the following:
\begin{align}
 &   \text{Pr}(b(D)\in \Gamma_{3}) = \left(\frac{\mid D \cap D^{\prime}\mid}{\mid D\mid}\right)^{k},
    \text{Pr}(b(D)\in \Gamma_{1}) = 1 -  \left(\frac{\mid D \cap D^{\prime}\mid}{\mid D \mid}\right)^{k} \text{ and }
     \text{Pr}(b(D)\in \Gamma_{2}) = 0, \\
 &   \text{Pr}(b(D^{\prime})\in \Gamma_{3}) = \left(\frac{\mid D \cap D^{\prime}\mid}{\mid D^{\prime}\mid}\right)^{k},
    \text{Pr}(b(D^{\prime})\in \Gamma_{2}) = 1 -  \left(\frac{\mid D \cap D^{\prime}\mid}{\mid D^{\prime}\mid}\right)^{k} \text{ and }
     \text{Pr}(b(D^{\prime})\in \Gamma_{1}) = 0.
\end{align}

Recall our goal is to show Bagging with replacement achieves $(\epsilon,\delta)$-differential privacy. We will first find a $\delta$ and then determine the value of $\epsilon$ for the given $\delta$. In particular, for any $\Delta \subseteq \Gamma_1$ and $|D^{\prime}|=n-1$, we have the following: 
\begin{align}
    \text{Pr}(b(D)\in \Delta) \leq \text{Pr}(b(D)\in \Gamma_1) = 1-\left(\frac{n-1}{n}\right)^{k} \text{ and }  \text{Pr}(b(D^{\prime})\in \Delta) = 0. 
\end{align}

In other words, we have the following:
\begin{align}
 \forall \Delta \subseteq \Gamma_1,  \text{Pr}(b(D)\in \Delta) \leq e^{\epsilon}\cdot 0 + 1-\left(\frac{n-1}{n}\right)^{k}=e^{\epsilon}\cdot  \text{Pr}(b(D^{\prime})\in \Delta)+ \left(1-\left(\frac{n-1}{n}\right)^{k}\right).
\end{align}

Therefore, we let $\delta = 1-(\frac{n-1}{n})^{k}$. Next, we prove that for $\forall \Delta \subseteq \Phi$, i.e., $\Delta$ is an arbitrary subset of $\Phi$, we have the following:
\begin{align}
\label{appendix-13}
    \text{Pr}(b(D)\in \Delta) \leq \left(\frac{n + 1}{n}\right)^{k} \cdot \text{Pr}(b(D^{\prime})\in \Delta) + \left(1-\left(\frac{n-1}{n}\right)^{k}\right). 
\end{align}

We decompose $\Delta$ into three disjoint subsets, i.e., $\Delta =\Delta_{1}\cup \Delta_2 \cup \Delta_3$, where $\Delta_1 \subseteq \Gamma_1$, $\Delta_2 \subseteq \Gamma_2$, and $\Delta_3 \subseteq \Gamma_3$. We first consider $|D^{\prime}|=n+1$. Then, we have the following: 
\begin{align}
& \text{Pr}(b(D)\in \Delta) \\
=& \text{Pr}(b(D)\in \Delta_1) + \text{Pr}(b(D)\in \Delta_2) + \text{Pr}(b(D)\in \Delta_3) \\
=& \text{Pr}(b(D)\in \Delta_3) \\
=& \text{Pr}(b(D^{\prime})\in \Delta_3)\cdot \frac{|D^{\prime}|^k}{|D|^k} \\
=& \text{Pr}(b(D^{\prime})\in \Delta_3)\cdot \left(\frac{n+1}{n}\right)^k  \\
\leq& \text{Pr}(b(D^{\prime})\in \Delta)\cdot \left(\frac{n+1}{n}\right)^k +  \left(1 - \left(\frac{n-1}{n}\right)^k\right) .
\end{align}

Similarly, when $|D^{\prime}|=n-1$, we have the following:
\begin{align}
 & \text{Pr}(b(D)\in \Delta) \\
=& \text{Pr}(b(D)\in \Delta_1) + \text{Pr}(b(D)\in \Delta_2) + \text{Pr}(b(D)\in \Delta_3) \\   
=& \text{Pr}(b(D)\in \Delta_3) + \text{Pr}(b(D)\in \Delta_1) \\
\leq & \text{Pr}(b(D)\in \Delta_3) + \left(1 - \left(\frac{n-1}{n}\right)^k\right) \\
=& \text{Pr}(b(D^{\prime})\in \Delta_3)\cdot \frac{|D^{\prime}|^k}{|D|^k} + \left(1 - \left(\frac{n-1}{n}\right)^k\right) \\
=& \text{Pr}(b(D^{\prime})\in \Delta_3)\cdot \left(\frac{n-1}{n}\right)^k + \left(1 - \left(\frac{n-1}{n}\right)^k\right) \\
\leq & \text{Pr}(b(D^{\prime})\in \Delta)\cdot \left(\frac{n-1}{n}\right)^k + \left(1 - \left(\frac{n-1}{n}\right)^k\right) .
\end{align}

Since we have $(\frac{n-1}{n})^k \leq (\frac{n+1}{n})^k$, we have the following when $|D^{\prime}|=n\pm 1$: 
\begin{align}
\forall \Delta \in \Phi,    \text{Pr}(b(D)\in \Delta) \leq \text{Pr}(b(D^{\prime})\in \Delta)\cdot \left(\frac{n +1}{n}\right)^k + \left(1 - \left(\frac{n-1}{n}\right)^k\right). 
\end{align}

We let $e^\epsilon = (\frac{n +1}{n})^k$. Then, we have $\epsilon =k\cdot \ln{\frac{n + 1}{n}}$. Therefore, we show $b(D)$ achieves $(k\cdot \ln{\frac{n + 1}{n}}, 1 - (\frac{n-1}{n})^k)$-differential privacy. We can view the training of the base model as post-processing of $b(D)$, which does not alter the privacy level according to Proposition \ref{post-processing}. Therefore,  Bagging with replacement achieves $\left(k\cdot \ln{\frac{n + 1}{n}}, 1 - (\frac{n-1}{n})^k\right)$-differential privacy.



\section{Proof of Theorem~\ref{theorem4}}
\label{appendix-C}

We show Bagging without replacement achieves $\left(\ln{\frac{n+1}{n+1-k\cdot N}}, \frac{k\cdot N}{n} \right)$-differential privacy when training $N$ models with $N\cdot k$ examples. We first show the proof for $N=1$, and then generalize it to arbitrary $N$. We reuse some notations defined in the proof of Theorem~\ref{bagging_dp_theorem_2.2} for simplicity. We use operator $a$ to denote the sampling operation of Bagging without replacement. We use $\Psi$ to denote the joint domain of $a(D)$ and $a(D^{\prime})$. Specifically, each element $\psi \in \Psi$ is a subsample of $k$ examples. Given $a(D)$ and $a(D^{\prime})$, we denote the following three subsets of domain $\Psi$: 
\begin{align}
   &  \Lambda_{1} = \{\psi \in \Psi| \text{Pr}(a(D)=\psi)>0, \text{Pr}( a(D^{\prime})=\psi) = 0\}, \\
   &  \Lambda_{2} = \{\psi \in \Psi| \text{Pr}(a(D)=\psi)=0, \text{Pr}( a(D^{\prime})=\psi) > 0\}, \\
   &  \Lambda_{3} = \{\psi \in \Psi| \text{Pr}(a(D)=\psi)>0, \text{Pr}( a(D^{\prime})=\psi) > 0\}. 
\end{align}

Intuitively, $\Lambda_{1}$ includes all subsamples that can only be derived from $D$. $\Lambda_{2}$ includes all subsamples that can only be derived from $D^{\prime}$. $\Lambda_{3}$ includes all subsamples that can be derived from both $D$ and $D^{\prime}$. We note that $\Lambda_{1}\cup \Lambda_{2} \cup \Lambda_{3}$ constitutes all possible subsamples of $a(D)$ and $a(D^{\prime})$. Since we consider Bagging without replacement, we have the following: 
\begin{align}
 &   \text{Pr}(a(D)=\psi)= 
    \begin{cases}
     \frac{1}{{|D|\choose k}}, &\text{ if } \psi \in \Lambda_1 \cup \Lambda_3, \\
     0, &\text{ otherwise}. 
    \end{cases} \\
&    \text{Pr}(a(D^{\prime})=\psi)= 
    \begin{cases}
     \frac{1}{{|D^{\prime}|\choose k}}, &\text{ if } \psi \in \Lambda_2 \cup \Lambda_3, \\
     0, &\text{ otherwise}. 
    \end{cases}
\end{align}

Moreover, the number of elements in $\Lambda_1$, $\Lambda_2$, and $\Lambda_3$ can be computed as ${|D| \choose k}-{|D\cap D^{\prime}|\choose k}$, ${|D^{\prime}| \choose k}-{|D\cap D^{\prime}| \choose k}$, and ${|D\cap D^{\prime}| \choose k}$, respectively.  
Given the probability density and the size of $\Lambda_1$, $\Lambda_2$, and $\Lambda_3$, we have the following:
\begin{align}
 &   \text{Pr}(a(D)\in \Lambda_{3}) = \frac{{| D \cap D^{\prime}| \choose k}}{{|D| \choose k}},
    \text{Pr}(a(D)\in \Lambda_{1}) = 1 -  \frac{{| D \cap D^{\prime}| \choose k}}{{|D| \choose k}} ,
     \text{Pr}(a(D)\in \Lambda_{2}) = 0. \\
 &   \text{Pr}(a(D^{\prime})\in \Lambda_{3}) = \frac{{| D \cap D^{\prime}| \choose k}}{{|D^{\prime}| \choose k}},
    \text{Pr}(a(D^{\prime})\in \Lambda_{2}) = 1 -  \frac{{| D \cap D^{\prime}| \choose k}}{{|D^{\prime}| \choose k}},
     \text{Pr}(a(D^{\prime})\in \Lambda_{1}) = 0.
\end{align}

Recall our goal is to show Bagging without replacement achieves $(\epsilon,\delta)$-differential privacy. We will first find a $\delta$ and then determine the value of $\epsilon$ for the given $\delta$. In particular, for any $\Delta \subseteq \Lambda_1$ and $|D|=n-1$, we have the following: 
\begin{align}
    \text{Pr}(a(D)\in \Delta) \leq \text{Pr}(a(D)\in \Lambda_1) =1-\frac{{n-1 \choose k}}{{n \choose k}}=\frac{k}{n}, \text{Pr}(a(D^{\prime})\in \Delta) = 0. 
\end{align}

In other words, we have the following:
\begin{align}
 \forall \Delta \subseteq \Lambda_1,  \text{Pr}(a(D)\in \Delta) \leq e^{\epsilon}\cdot 0 + \frac{k}{n} =e^{\epsilon}\cdot  \text{Pr}(a(D^{\prime})\in \Delta)+ \frac{k}{n}.
\end{align}

Therefore, we let $\delta = \frac{k}{n}$. Next, we prove that for $\forall \Delta \subseteq \Psi$, i.e., $\Delta$ is an arbitrary subset of $\Psi$, we have the following:
\begin{align}
\label{appendix-37}
    \text{Pr}(a(D)\in \Delta) \leq \frac{n+1}{n+1-k} \cdot \text{Pr}(a(D^{\prime})\in \Delta) + \frac{k}{n}.
\end{align}

We decompose $\Delta$ into three disjoint subsets, i.e., $\Delta =\Delta_{1}\cup \Delta_2 \cup \Delta_3$, where $\Delta_1 \subseteq \Lambda_1$, $\Delta_2 \subseteq \Lambda_2$, and $\Delta_3 \subseteq \Lambda_3$. We first consider $|D^{\prime}|=n+1$. Then, we have the following: 
\begin{align}
& \text{Pr}(a(D)\in \Delta) \\
=& \text{Pr}(a(D)\in \Delta_1) + \text{Pr}(a(D)\in \Delta_2) + \text{Pr}(a(D)\in \Delta_3) \\
=& \text{Pr}(a(D)\in \Delta_3) \\
=& \text{Pr}(a(D^{\prime})\in \Delta_3)\cdot \frac{{|D^{\prime}| \choose k}}{{|D| \choose k}} \\
=& \text{Pr}(a(D^{\prime})\in \Delta_3)\cdot \frac{{n+1 \choose k}}{{n \choose k}}  \\
\leq& \text{Pr}(a(D^{\prime})\in \Delta_3)\cdot \frac{n+1}{n+1-k} \\
\leq& \text{Pr}(a(D^{\prime})\in \Delta)\cdot \frac{n+1}{n+1-k} +  \frac{k}{n} .
\end{align}

Similarly, when $|D^{\prime}|=n-1$, we have the following:
\begin{align}
 & \text{Pr}(a(D)\in \Delta) \\
=& \text{Pr}(a(D)\in \Delta_1) + \text{Pr}(a(D)\in \Delta_2) + \text{Pr}(a(D)\in \Delta_3) \\   
=& \text{Pr}(a(D)\in \Delta_3) + \text{Pr}(a(D)\in \Delta_1) \\
\leq & \text{Pr}(a(D)\in \Delta_3) + \frac{k}{n} \\
=& \text{Pr}(a(D^{\prime})\in \Delta_3)\cdot \frac{{|D^{\prime}| \choose k}}{{|D| \choose k}} + \frac{k}{n} \\
=& \text{Pr}(a(D^{\prime})\in \Delta_3)\cdot \frac{n-k}{n} + \frac{k}{n} \\
\leq & \text{Pr}(a(D^{\prime})\in \Delta)\cdot \frac{n-k}{n} + \frac{k}{n} .
\end{align}

Since we have $\frac{n-k}{n} \leq \frac{n+1}{n+1-k}$, we have the following when $|D^{\prime}|=n\pm 1$: 
\begin{align}
\forall \Delta \in \Psi,    \text{Pr}(a(D)\in \Delta) \leq \text{Pr}(a(D^{\prime})\in \Delta)\cdot \frac{n+1}{n+1-k} + \frac{k}{n} . 
\end{align}

We let $e^\epsilon = \frac{n+1}{n+1-k}$. Then, we have $\epsilon = \ln{\frac{n+1}{n+1-k}}$. Therefore, we show $a(D)$ achieves $( \ln{\frac{n+1}{n+1-k}}, \frac{k}{n} )$-differential privacy. We can view the training of the base model as post-processing of $a(D)$, which does not alter the privacy level. Therefore, Bagging without replacement achieves $(\ln{\frac{n+1}{n+1-k}}, \frac{k}{n} )$-differential privacy when $N=1$. Similarly, when we sample $N\cdot k$ examples to create a subsample, $a(D)$ achieves $(\ln{\frac{n+1}{n+1-N\cdot k}}, \frac{N\cdot k}{n} )$-differential privacy. Given $N\cdot k$ examples, we can evenly divide them into $N$ subsamples, and train $N$ base models which does not incur extra privacy loss. Therefore, Bagging without replacement achieves $(\ln{\frac{n+1}{n+1-N\cdot k}}, \frac{N\cdot k}{n} )$-differential privacy when training $N$ base models.

\section{Proof of Theorem~\ref{tightness-theorem-3}}
\label{appendix-D}
We show the tightness of our derived $(N\cdot k\cdot \ln{\frac{n + 1}{n}}, 1 - (\frac{n-1}{n})^{N\cdot k})$-differential privacy for Bagging with replacement when no assumptions are made on the base learner. Without loss of generality, we show the proof for $N=1$. Specifically, for $N>1$, we can use $N\cdot k$ as a new subsample and obtain the proof for arbitrary $N$. Our idea is that, for any $\delta< 1 - (\frac{n-1}{n})^{k}$, we can find a base learner such that Bagging with replacement cannot satisfy $(\epsilon,\delta)$-differential privacy for any $\epsilon$. For simplicity, we reuse the notations in the proof of Theorem~\ref{bagging_dp_theorem_2.2}. 
We let $|D^{\prime}|=n-1$ and  $\Delta^{\prime}=\Gamma_1$. Then we have the following probabilities: 
\begin{align}
  &  \text{Pr}(b(D)\in \Delta^{\prime})=\text{Pr}(b(D)\in \Gamma_1) = 1 -  \left(\frac{\mid D \cap D^{\prime}\mid}{\mid D\mid}\right)^{k} = 1 - \left(\frac{n-1}{n}\right)^k, \\
 &   \text{Pr}(b(D^{\prime})\in \Delta^{\prime})=\text{Pr}(b(D^{\prime})\in \Gamma_1) = 0. 
\end{align}

In other words, when  $\delta< 1 - (\frac{n-1}{n})^k$, we have the following: 
\begin{align}
 \text{Pr}(b(D)\in \Delta^{\prime}) > e^{\epsilon}\cdot 0 + \delta=e^{\epsilon}\cdot  \text{Pr}(b(D^{\prime})\in \Delta^{\prime}) + \delta, \text{ for any } \epsilon.    
\end{align}

Therefore, if the base learner is the $k$-nearest neighbor algorithm, whose model parameters are the $k$ training examples in the subsample, Bagging with replacement cannot achieve $(\epsilon,\delta)$-differential privacy.  Therefore, when no assumptions are made on the base learner, Bagging with replacement cannot achieve $(\epsilon ,\delta )$-differential privacy when $\delta <  1 - (\frac{n-1}{n})^k$.

\section{Proof of Theorem~\ref{tightness-theorem-4}}
\label{appendix-E}

We show the tightness of our derived $(\ln{\frac{n+1}{n+1-N\cdot k }}, \frac{N\cdot k}{n} )$-differential privacy for Bagging without replacement when no assumptions are made on the base learner. Without loss of generality, we show the proof for $N=1$. Specifically, for $N>1$, we can use $N\cdot k$ as a new subsample and obtain the proof for arbitrary $N$. Our idea is that, for any $\delta< \frac{N\cdot k}{n}$, we can find a base learner such that Bagging without replacement cannot satisfy $(\epsilon,\delta)$-differential privacy for any $\epsilon$. For simplicity, we reuse the notations in the proof of Theorem~\ref{theorem4}. We let $|D^{\prime}|=n-1$ and $\Delta^{\prime}=\Lambda_1$. 

Then, we have the following probabilities: 
\begin{align}
  &  \text{Pr}(a(D)\in \Delta^{\prime})=\text{Pr}(a(D)\in \Lambda_1) = 1 -  \frac{{| D \cap D^{\prime}| \choose k}}{{|D| \choose k}} = \frac{k}{n}, \\
 &   \text{Pr}(a(D^{\prime})\in \Delta^{\prime})=\text{Pr}(a(D^{\prime})\in \Lambda_1) = 0. 
\end{align}

In other words, when $\delta <  \frac{k}{n}$, we have the following: 
\begin{align}
 \text{Pr}(a(D)\in \Delta^{\prime}) > e^{\epsilon }\cdot 0 + \delta =e^{\epsilon }\cdot  \text{Pr}(b(D^{\prime})\in \Delta ) + \delta,\text{ for any }\epsilon.     
\end{align}

Therefore, if the base learner is the $k$-nearest neighbor algorithm, whose model parameters are the $k$ training examples in the subsample, Bagging without replacement cannot achieve $(\epsilon,\delta)$-differential privacy.  Therefore, when no assumptions are made on the base learner, Bagging without replacement cannot achieve $(\epsilon ,\delta )$-differential privacy when  $\delta <  \frac{k}{n}$.

\section{Model Architecture for MNIST}

\begin{table}[h]
  \caption{Model architecture of the base classifier for MNIST experiments.}
  \label{MNIST-structure}
  \centering
  
  \begin{tabular}{cc}
    \toprule
    Layer & Output Shape \\
    \midrule
    (Input) & $28 \times 28 \times 1$\\
    2D Convolution + ReLU & $14 \times 14 \times 16$\\
    2D MaxPooling & $13 \times 13 \times 16$\\
    2D Convolution + ReLU & $5 \times 5 \times 32$\\    
    2D MaxPooling & $4 \times 4 \times 32$\\    
    Flatten & $512$\\
    Fully Connected + ReLU & $32$\\
    Fully Connected & 10 \\
    \bottomrule
  \end{tabular}
\end{table}


\end{document}